\newcolumntype{C}[1]{>{\centering\let\newline\\\arraybackslash\hspace{0pt}}m{#1}}
\newlist{todolist}{itemize}{2}
\setlist[todolist]{label=$\square$}
\newtheorem{theorem}{Theorem}
\newcommand{\draft}[1]{}
\title{Identifying and characterizing ZMap scans:\\a cryptanalytic approach}
\newcommand{\cmark}{\ding{51}}
\newcommand{\xmark}{\ding{55}}
\newcommand{\offset}{\mathit{offset}}
\author{Johan Mazel}
\author{R\'{e}mi Strullu}
\affil{
    \texttt{\normalsize johan.mazel@ssi.gouv.fr, remi.strullu@ssi.gouv.fr} 
}
\affil{ANSSI}
\date{}
\begin{document}
\maketitle

\begin{abstract}
Network scanning tools play a major role in Internet security.
They are used by both network security researchers and malicious actors to 
identify vulnerable machines exposed on the Internet.
ZMap is one of the most common probing tools for high-speed Internet-wide 
scanning.
We present novel identification methods based on the IPv4 iteration process 
of ZMap.
These methods can be used to identify ZMap scans with a small number of 
addresses extracted from the scan.
We conduct an experimental evaluation of these detection methods on synthetic, 
network telescope, and backbone traffic.
We manage to identify 28.5\% of the ZMap scans in real-world traffic.
We then perform an in-depth characterization of these scans regarding, for
example, targeted prefix and probing speed.
\end{abstract}

\section{Introduction}

Internet wide scanning tools are commonly used by privacy and security 
researchers and malicious actors.
Use cases range from anti-censorship techniques 
\cite{Khattak2016You,Wustrow2014Tapdance} or computer security 
\cite{Gasser2016Scanning} research, to commercial services 
\cite{Durumeric2015Search,Shodan} and malicious mass-exploitation 
\cite{Falliere2011Distributed}.
Leonard et al. \cite{Leonard2013Demystifying} proposed the seminal work 
regarding high-speed uniformly spread Internet-wide scanning. 
Durumeric et al. and Robert Graham then published two tools, ZMap 
\cite{Durumeric2013ZMap,Adrian2014Zippier} and Masscan \cite{Masscan}, that 
considerably eased Internet-wide scanning.
This lead to a constant use increase \cite{Mazel2017Profiling}.
Security administrators monitor their networks to detect attacks at several 
stages such as reconnaissance, exploitation or command and control.
Incident response teams analyze network traffic logs to determine root 
causes of compromise.
Network scans are reconnaissance events, and thus interest security 
administrators (as occurring events) and incident response team (as security 
incident root cause).
These actors, however, often have trouble understanding probing 
scope and purpose.

Previous work \cite{Durumeric2014Security,Doerr2016Scan,Mazel2017Profiling} 
analyzed ZMap usage in the wild.
These works use heuristics \cite{Durumeric2014Security,Doerr2016Scan}, 
signatures \cite{Mazel2017Profiling}, and the fact that ZMap sets the IP ID 
field to 54321, to identify ZMap traffic.
They also rely on privileged points of view on traffic: a network telescope 
slightly smaller than a /9 prefix for \cite{Durumeric2014Security}, and 
backbone traffic containing prefixes adding up to a /13 prefix in 
\cite{Mazel2017Profiling}.
Security administrators usually operate smaller networks.
It is thus more difficult for them to detect incoming probing traffic 
\cite{Leonard2012Stochastic}, and to assess its nature.

Our goal is to identify ZMap scans and associated internal characteristics
from short sequences of observed packets.
We propose new methods to identify ZMap scans with or without the IP ID 
fingerprint.
These identification methods also recover targeted prefix which can be used by 
administrators and incident response teams to determine probing 
purpose.
For example, administrators can quickly discard indiscriminate Internet-wide 
probing, and focus on scans that specifically target their network.
By recovering the internal state of ZMap, we can also predict future probed 
addresses, and block related scanning traffic.
Similarly, incident response teams can use our identification methods to 
investigate scanning that targeted a compromised host.
Our methods are thus useful for both \emph{real-time} use cases (e.g. scan 
detection and probed IP address prediction and blocking), and 
\emph{a posteriori} ones (e.g. network traffic forensics).
Finally, we provide an in-depth characterization of ZMap 
scans that goes beyond existing work
\cite{Durumeric2014Security,Doerr2016Scan,Mazel2017Profiling}, for example 
regarding scan progress, packet rates and unnecessary probing.

Our contribution is threefold.
First, we propose two crypto-analysis methods to identify ZMap scanning.
Second, we evaluate efficiency and computing cost using synthetic data and 
real-world traffic captured from both network telescope and backbone traffic.
Third, we provide an in-depth analysis of ZMap usage in the wild.
We thus identify misuses, such as private IP address probing and packet rate 
above upstream capacity, that waste network resources.

Our paper is structured as follows.
\Cref{sec:related_work} present existing work on probing and ZMap scan 
analysis.
\Cref{sec:overview} details relevant aspects of ZMap design and implementation.
\Cref{sec:attacks} describes our identification methods.
\Cref{sec:results} presents our identification and characterization results.

\section{Related work}
\label{sec:related_work}

Several works analyze scans in real-world traffic 
\cite{Allman2007IMC,Glatz2012IMC,Brownlee2012PAM}.
ZMap usage has been documented in network telescope \cite{Durumeric2014Security} 
and backbone data \cite{Mazel2017Profiling}.
In \cite{Durumeric2014Security}, Durumeric et al. characterize ZMap usage, 
packet rate, source IP location, targeted ports and scan coverage.
In \cite{Mazel2017Profiling}, Mazel et al. describe the rise of ZMap usage from 
2013 to 2016.

Doerr et al. \cite{Doerr2016Scan} follow an approach similar to ours.
They extract the internal state of ZMap to predict future probed IP addresses 
and block forecasted traffic.
To this end, they brute force the sequence of observed IP addresses collected 
in a /16 network telescope using GPUs.
Their work, however, exhibit several limitations.
First, they do not describe the blacklist and offset mechanisms in ZMap.
Second, they do not take into account packet reordering which is an exteremely common 
phenomenom in networks 
\cite{Bennett1999Packet,Paxson1997End,Wang2004Study,Jaiswal2007Measurement,Murray2012State}.
Third, they do not provide any characteristics regarding observed scans.

We improve the work of Doerr et al. \cite{Doerr2016Scan} by taking into 
acount ZMap's blacklist and offset mechanisms (see \Cref{sec:overview}).
We also consider the impact of packet reordering on our methods and provide 
a sampling methodology that alleviates its impact (see \Cref{subsec:sampling}).
By leveraging ZMap's internal state, we detail ZMap scans characteristics 
that were not addressed in \cite{Durumeric2014Security,Mazel2017Profiling},
such as blacklist usage, targeted prefix, packet rate, evidence of cooperation 
among sources, scan progress and probe visibility (see \Cref{sec:prefix} to 
\Cref{sec:emitted_packet_rate}).

\section{ZMap overview}
\label{sec:overview}

We describe the process used by ZMap to iterate over the IPv4 address space.
This process relies on modular arithmetic in a finite field $\mathbb{F}_{p}$ 
defined by a prime number $p$.
Mathematical background can be found in \cite{cohen2013course}.

The structure of the iterator can be decomposed into two parts: the internal 
state $s$ that is updated over the integers of $\mathbb{F}_{p}$ and the 
function $f$ that maps the internal state into an IPv4 address.

\begin{figure}[t!]
  \centering
  \subfloat[Original sharding \cite{Adrian2014Zippier}]{
    \centering
    \includegraphics[width=0.5\columnwidth]{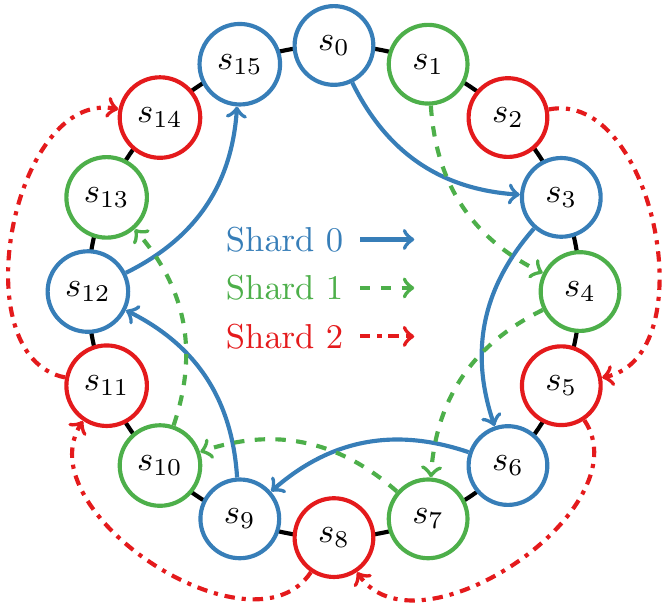}
    \label{fig:sharding_zippier}
  }
  \subfloat[New pizza-like sharding \cite{zmap_sharding_pizza}]{
    \centering
    \includegraphics[width=0.5\columnwidth]{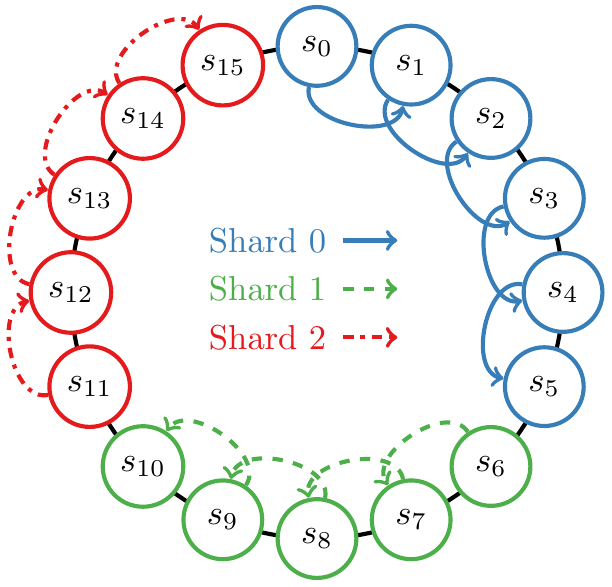}
    \label{fig:sharding_pizza}
  }
  \caption{Sharding mechanisms used in ZMap.
  }
  \label{fig:sharding}
\end{figure}

\subsection{Internal state}
\label{sec:internal_state}

At the initialization of the scan, a prime number $p$ is selected according to 
the number of IPv4 addresses to scan.
If we denote $n$ this number of addresses, then $p$ is the smallest prime number 
among the integers $\{p_0, ..., p_4\}$ 
 to be be greater or equal to $n$.
The prime numbers $p_0,..., p_4$ are predefined by the ZMap implementation: 
$p_0=2^{8}+1$, $p_1=2^{16}+1$, $p_2=2^{24}+43$, $p_3=2^{28}+3$ and 
$p_4=2^{32}+15$.

A primitive root $g$ of $\mathbb{F}_{p}$ and the initial internal state 
$s_0$ are also randomly generated at the beginning of each scan.
If the random generator has good randomness properties, we can suppose that 
these values are unique and characterize each scan.
Then, at each step $i$ in $\{1,..., p-2\}$, the state $s_i$ is updated 
according to the formula:
\begin{equation} \label{eq:update1}
s_{i} = g \cdot s_{i-1} \mod p
\end{equation}

ZMap supports distributed scans over several instances.
If this option is activated, the address generation process is 
parallelized between the instances by splitting the sequence of states into 
shards of equal length.
To perform this task, ZMap implements two types of sharding techniques that 
depend on the software version. 

Prior to a ZMap commit \cite{zmap_sharding_pizza} on September 15, 2017, the 
sharding technique was realized as follow : let $d$ be the number of instances 
that perform the distributed scan.
At the initialization step, a primitive root $g$ and a random number $s_{0}$ in 
$\mathbb{F}_{p}$ are shared between all the instances.
Then, each instance indexed by $k$ in $\{0,..., d-1\}$ generates its own 
initial internal state $s_{0, k} = s_{0} \cdot g^{k}$ and the internal state 
$s_{i, k}$ is updated according to the formula:
\begin{equation} \label{eq:update2}
s_{i, k} = g^d \cdot s_{i-1, k} \mod p
\end{equation}

If we denote $Shard_{k}$ the set of internal states generated by the instance 
$k$, then
$$Shard_{k} = \left\{s_{i,k} \mid i \in \left\{0,...,\Bigl \lfloor \dfrac{p-2}{d} \Bigr 
\rfloor \right\}\right\}$$ where $\Bigl \lfloor \dfrac{p-2}{d} \Bigr \rfloor$ is the 
integer division of $p-2$ by $d$.

It is easy to notice that 
$$\bigcup_{0 \leq k <d} Shard_{k} = \{s_i \mid i \in \{0,..., p-2 \}\} = \mathbb{F}_{p} 
\setminus \{0\}$$ and the sets $Shard_{k}$ are disjoint.
This allows to cover the whole address space to scan without repetition of an address 
into different shards.
This method is depicted in \Cref{fig:sharding_zippier}.

Since the ZMap version of September 15, 2017 \cite{zmap_sharding_pizza}, the 
sharding technique has been modified.
Each $Shard_{k}$ contains the consecutive states $s_i$ ($i$ in 
$\{k\cdot \lfloor p-2 / d \rfloor ,..., (k+1)\cdot \lfloor p-2 / d \rfloor - 1\}$) 
that would have been obtained with a non distributed scan
.
This is depicted in \Cref{fig:sharding_pizza}.
With our previous notations, the internal state of the instance $k$ is 
initialized with $s_{0,k}=s_0 \cdot g^{k\cdot \lfloor p-2 / d \rfloor}$ and is 
updated according to:
\begin{equation} \label{eq:update3}
s_{i, k} = g \cdot s_{i-1, k} \mod p
\end{equation}

In all these three settings, the choice of the values ($g$, $s_{0}$) is 
important as these values can be used to %
characterize the scan.
Two scans with the same couple of values $(g, s_0)$ will generate the same IPv4 
sequence of addresses.
In the case of distributed scans, it can also be used to identify the machines 
that cooperate to the same scan. %
ZMap generates these values pseudo-randomly by applying the block cipher AES to 
a selected seed.
By default, ZMap obtains the seed randomly from the Unix /dev/random device.
The user can, however, specify a seed as an option (-e) or in the configuration 
file.
Hence, when the user repeatedly launches ZMap with a specific seed, the 
same values ($g$, $s_{0}$) are generated and the same sequence of addresses 
will be probed.

\subsection{Mapping function}
The sequence of scanned IPv4 addresses is obtained by applying a mapping 
function $f$ to the current internal state at each step of the ZMap generation 
process.

The function $f$ is built at the initialization of the scan and depends on two 
parameters: a whitelist of IPv4 addresses $\mathcal{W}$ and a blacklist 
$\mathcal{B}$.
The whitelist $\mathcal{W}$ contains the addresses that the user has specified 
to scan and the blacklist $\mathcal{B}$ contains the addresses to skip.
Hence, if $\mathcal{S}$ is the set of IPv4 addresses that will eventually 
be scanned, we have the relation $\mathcal{S} = \mathcal{W} \setminus 
\mathcal{B}$.

The function $f$ is a one-to-one mapping from the integers $\{1,...,n\}$ to the 
address set $\mathcal{S}$.
The address sequence $(ip_1,...,ip_n)$ is then obtained by applying 
$f$ to the internal states $s_i$ when $s_i \leq n$, as described in \Cref{algo:algo1}.

\begin{algorithm}
\caption{IPv4 sequence generation $(ip_j)_{j\in \{1,...,n\}}$}
\begin{algorithmic}
	\STATE $j \leftarrow 0$
	\FOR{$i\in\{1,...,p\}$}
  		\STATE{Update $s_{i}$ using Eq.~\ref{eq:update1} (or 
        Eq.~\ref{eq:update2},\ref{eq:update3} in case of sharding)
      }
  		\IF{$s_i \leq n$}
  			\STATE $ip_j \leftarrow f(s_i)$
  			\STATE $j \leftarrow j+1$
  		\ENDIF
	\ENDFOR
\end{algorithmic}
\label{algo:algo1}
\end{algorithm}

The construction of $f$ relies on a tree structure $T$ that speeds up the 
computation of the addresses.
$T$ is a binary tree that is built at the initialization of the scan, when the 
address lists $\mathcal{W}$ and $\mathcal{B}$ are provided.
The leaves of $T$ represent disjoint subnets of $\mathcal{S}$ and the union of 
the leaves covers $\mathcal{S}$.
The network representation $t_l$ of a leaf $l$ in $T$ is the following: for 
each node, we assign the label $0$ to the left part of the tree under the node and $1$ to the 
right part.
With this coding, a leaf $l$ of $T$ can be represented as a finite sequence 
$(x_1, ..., x_q)$ of node selections where $x_i \in \{0, 1\}$.
Using the CIDR notation, we associate to $l$, the subnet
$t_l \subseteq \mathcal{S}$ whose prefix address is the bits $(x_1, ..., x_q)$ 
and the prefix length is $q$.
Furthermore, by construction of $T$, we enforce that, for each leaf 
$l$, $t_l$ is maximal in $\mathcal{S}$ in the sense that there is no subnet $t' 
\subseteq \mathcal{S}$ with $t_l \subseteq t'$ and prefix length smaller 
than $q$.

At each level $i$, the nodes of $T$ are labeled with 
the number of addresses in 
$\mathcal{S}$ that fall in the subnet of prefix length $i$ defined by the node.
When the building process of $T$ is finished, the mapping of an index $x \leq 
n$ 
to an address is performed by efficiently computing 
$f_0(x) = ip$ where $ip$ is the $x$th address in the list $\mathcal{S}$ ordered 
by the integer comparison of the host byte representation of the IPv4 
addresses.
This computation is performed by applying recursively \Cref{algo:algo2} to 
the tree $T$.

\begin{algorithm}
\caption{Index to address function $f_0(T,x)$}
\begin{algorithmic}
  	\IF{$T$ is a leaf $l$}
		\STATE $ip \leftarrow x_1 \cdot 2^{31} + x_2 \cdot 2^{30} + ... + x_q + x$ 
    where $(x_1, ..., x_q)$ is the sequence that codes the leaf
  		\STATE \textbf{return} the IPv4 address corresponding to $ip$ 
  	\ELSE
  		\STATE Let $T_0$ the left subtree of $T$ 
  		\STATE Let $T_1$ the right subtree 
  		\STATE Let $n_0$ 
      the label of the left node of $T$ (number of addresses in the left subtree)
	  	\IF{$x \leq n_0$}
  			\STATE \textbf{return} $f_0(T_0,x)$
  		\ELSE
  			\STATE \textbf{return} $f_0(T_1,x-n_0)$  		
  		\ENDIF
	\ENDIF
\end{algorithmic}
\label{algo:algo2}
\end{algorithm}

\subsection{Radix table}
In the last versions of ZMap (starting from v1.1.0 which was released on 
November 18th 2013), 
a lookup table called the radix table is used for 
speeding up the computation of the addresses in addition to the tree $T$.
The radix table $R$ contains the CIDR addresses of all the subnets of prefix 
length 20 included in $\mathcal{S}$.
For these versions of ZMap, the mapping function $f$ can be decomposed into two 
subfunctions $f_0$ and $f_1$.
If the index $x \in \{1,...,n\}$
 is less than $n_R \cdot 2^{12}$, where $n_R$ is the 
length of the radix table, then the radix table $R$ is used for the computation 
and $f(x) = f_1(R, x)$.
On the other hand, if $x>n_R \cdot 2^{12}$, then the tree $T$ built 
on the remaining addresses of $\mathcal{S}$ is used and $f(x) = f_0(T,x - n_R 
\cdot 2^{12})$.

The radix table $R$ contains the CIDR addresses of all the subnets of prefix 
length 20 included in $\mathcal{S}$ and is ordered by the integer comparison of 
the prefix in their host byte representation.
If $i$ is an integer smaller than $n_R$, $R[i]$ is the $i$th subnet of prefix length 20 included in $\mathcal{S}$.
Then $f_1(R,x)$ is defined as the $j$th address of the subnet $R[i]$ with $i$ 
the integer division of $x$ by $2^{12}$ and $j = x \mod 2^{12}$ (see \Cref{algo:algo3}).

\begin{algorithm}
\caption{Index to address function $f_1(R,x)$}
\begin{algorithmic}
	\STATE $i \leftarrow \Bigl \lfloor \dfrac{x}{2^{12}} \Bigr \rfloor $
	\STATE $j \leftarrow x - i \cdot 2^{12}$
	\STATE $ip \leftarrow R[i] \cdot 2^{12} + j$
  	\STATE \textbf{return} the IPv4 address corresponding to $ip$  		
\end{algorithmic}
\label{algo:algo3}
\end{algorithm}

\section{ZMap detection methods}
\label{sec:attacks}
We present two detection methods of ZMap scans based on the cryptanalysis of 
the IPv4 address generation.
These methods find the generator of the scan $g$ given a few samples of IPv4 
addresses extracted from the scan.

\subsection{Observation network}
\label{subsec:observation_network}
We denote by $\mathcal{O}$ the set of IPv4 addresses that are monitored by the 
administrator of a network.
In the case of an Internet-wide scan, some of the scanned addresses of 
$\mathcal{S}$ will reach the observed network $\mathcal{O}$.
We then make several hypotheses.

First, the administrator can retrieve a significant part of these addresses, 
and determine their arrival order.
Notice that we do not assume that all the scanned addresses that lay in the 
set $\mathcal{O}$ are observed by the administrator: some loss can occur 
during the capture of the packets.
However, the capture quality (packet loss rate and packet reordering) 
and the monitored network size both affect the performance 
(speed and success) of our detection methods.

Second, we also suppose that the observation network is completely included in 
a subnet of the scanned addresses.
Formally, if we order the monitored addresses by their host byte 
representation, we suppose that there is no IPv4 address 
$ip \not\in \mathcal{S}$ such that $o_1 < ip < o_2$ with $o_1, o_2$ in 
$\mathcal{O}$.
This occurs when $\mathcal{O}$ is a single subnet and there is no blacklisted 
address in $\mathcal{O}$.
Durumeric et al. \cite{Durumeric2014Security} reported that they received 
exclusion requests for 5.4 millions IP addresses.
They thus added 21,094 subnets of prefix length 24 to the default blacklist 
\cite{zmap_blacklisting}. 
This represents 0.1\% of all subnets of prefix length 24.
It means that if $\mathcal{O}$ is a subnet of prefix length 16 (resp. 20), and 
we consider the worst case where the blacklisted subnets are uniformly 
distributed among the remaining subnetworks of prefix length 24, then, the 
probability that $\mathcal{O}$ fulfill our hypothesis is 72\% (resp. 98\%).
This hypothesis on the observation network can greatly simplify the 
computation of the mapping function described in \Cref{sec:overview} 
as we show in the following Theorem.

\begin{theorem}\label{th:offset}
Suppose that there is a subnet $\mathcal{I}$ such that
$\mathcal{O}\subseteq \mathcal{I} \subseteq \mathcal{S}$.
Then
there is an integer that we denote $\emph{offset}$ such that for all 
$x \in \{1,...,n\}$ with $f(x) \in \mathcal{O}$

$$f(x) = hton(x + \mathit{offset})$$ where $hton$ 
is the function that converts host bytes to IPv4 addresses.
\end{theorem}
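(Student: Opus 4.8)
The plan is to prove the slightly stronger statement that the integer quantity $hton^{-1}(f(x)) - x$ takes a single value over all indices $x$ with $f(x) \in \mathcal{I}$, and then to read off $\mathit{offset}$ as that value, specializing to $\mathcal{O} \subseteq \mathcal{I}$ only at the end. The geometric fact driving the whole argument is that a subnet, written in host-byte representation, is an interval of consecutive integers: if $\mathcal{I}$ has prefix value $a$ and prefix length $\ell$, then $hton^{-1}(\mathcal{I}) = \{a, a+1, \dots, a + 2^{32-\ell} - 1\}$. Since $\mathcal{I} \subseteq \mathcal{S}$, every integer in this interval is a scanned address, so $\mathcal{I}$ contains no blacklisted gaps inside $\mathcal{S}$.

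First I would record that $f$ is strictly increasing on each of its two regimes. On the radix part, $f_1(R,x) = R[\lfloor x/2^{12}\rfloor]\cdot 2^{12} + (x \bmod 2^{12})$ is manifestly increasing in $x$ because $R$ is sorted; on the tree part, $f_0$ returns by construction the $(x - n_R \cdot 2^{12})$-th smallest of the residual addresses. Thus, within a single regime, consecutive indices map to $\mathcal{S}$-addresses of strictly increasing host-byte value. Next I would check that $\mathcal{I}$ is handled entirely within one regime. The radix table holds exactly the complete $/20$ subnets of $\mathcal{S}$, while the tree holds the rest. If $\ell \le 20$, then $\mathcal{I}$ is a union of $/20$ blocks, each contained in $\mathcal{S}$ and hence in $R$, so $\mathcal{I}$ is radix-handled; if $\ell > 20$, then $\mathcal{I}$ lies inside a single $/20$ block $P$, and whether $P$ is complete (radix) or partial (tree) fixes the regime for all of $\mathcal{I}$ simultaneously. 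Either way $f^{-1}(\mathcal{I})$ sits in one regime.

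The heart of the argument is then a counting step. Take two indices $x < x'$ with $f(x), f(x') \in \mathcal{I}$, and write $v = hton^{-1}(f(x))$, $v' = hton^{-1}(f(x'))$; monotonicity on the common regime gives $v < v'$, and every intermediate index maps to an $\mathcal{S}$-address of value strictly between $v$ and $v'$, which by the interval property of $\mathcal{I}$ lies in $\mathcal{I}$ as well. Since $f$ is a bijection and all integers between $v$ and $v'$ belong to $\mathcal{S}$, the intermediate addresses are in bijection both with the intermediate indices and with those integers, forcing $x' - x - 1 = v' - v - 1$, i.e. $v - x = v' - x'$. Hence $hton^{-1}(f(x)) - x$ is independent of $x$; naming it $\mathit{offset}$ gives $f(x) = hton(x + \mathit{offset})$, and the claim follows for all $x$ with $f(x) \in \mathcal{O} \subseteq \mathcal{I}$.

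I expect the genuine obstacle to be the confinement step rather than the counting: one must verify that ZMap's radix-table reordering, which enumerates complete $/20$ blocks before the residual tree addresses and so is \emph{not} globally order-preserving, cannot split the full subnet $\mathcal{I}$ across the two regimes and thereby introduce a discontinuity in $hton^{-1}(f(x)) - x$. It is precisely the compatibility between the $/20$ granularity of $R$ and the dyadic subnet structure of $\mathcal{I}$ that makes a single global $\mathit{offset}$ well defined, so this is the point where the hypothesis $\mathcal{I} \subseteq \mathcal{S}$ (no blacklisted address inside $\mathcal{I}$) is really used.
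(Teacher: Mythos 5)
Your proof is correct, but it takes a genuinely different route from the paper's. The paper argues by explicit computation in three cases keyed to the size of the minimal subnet enclosing $\mathcal{O}$: when a subnet $\mathcal{I'} \subseteq \mathcal{S}$ of size $2^{12}$ contains $\mathcal{O}$, it identifies $\mathcal{I'}$ with a radix entry $R[i_0]$ and reads off $\offset = N - i_0 \cdot 2^{12}$; when the minimal enclosing subnet is larger than $2^{12}$, it decomposes it into consecutive $/20$ blocks, obtains one offset per block from the first case, and verifies by index arithmetic ($N_j = N_0 + j \cdot 2^{12}$, $i_j = i_0 + j$) that these per-block offsets coincide; when the radix table does not apply, it locates $\mathcal{I}$ inside a single leaf of $T$ and reads off $\offset = N - x_0$. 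You instead prove constancy of $hton^{-1}(f(x)) - x$ over $f^{-1}(\mathcal{I})$ abstractly, from three structural facts: a subnet is an interval in host-byte order, $f$ is strictly increasing within each regime, and $\mathcal{I}$ is confined to a single regime; your bijection/counting step then replaces the paper's explicit offset formulas, and it correctly isolates where $\mathcal{I} \subseteq \mathcal{S}$ (no blacklisted gaps) is used. Your confinement lemma (prefix length $\leq 20$ versus $> 20$) plays the role of the paper's case split, and your counting argument uniformly covers the single-radix-entry, multi-block, and tree-leaf situations in one stroke (it also covers, trivially, the pre-v1.1.0 case with no radix table at all, which the paper treats explicitly and you leave implicit). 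What the paper's concrete computation buys in exchange is an explicit value for the offset: in the radix regime $\offset = (R[i_0] - i_0) \cdot 2^{12}$ is a multiple of $2^{12}$, which is precisely what justifies the paper's later claim that only $2^{20}$ offset values need to be brute-forced. That divisibility also follows from your argument by evaluating $v - x$ at a single radix-handled point, but it is not stated in your proof and would need to be added to support that application.
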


\begin{proof}
Let first assume that there exists a subnet $\mathcal{I'}$ of size $2^{12}$ 
with $\mathcal{O} \subseteq \mathcal{I'} \subseteq \mathcal{S}$.
If ZMap supports radix table then $f(x) = f_1(R, x)$.
Since $f(x)$ is in $\mathcal{O}$, then $f_1(R, x)$ is in $\mathcal{I'}$.
We deduce that there is a $i_0$ such that $R[i_0] = \mathcal{I'}$ and $\lfloor 
x /2^{12} \rfloor = i_0$.
From the definition of $f_1$, $f_1(R, x)$ is the $j$th address of the subnet 
$R[i_0]$. Hence we have $f(x) = hton(N + j)$ with $N$ the host byte 
representation of the subnet $R[i_0]=\mathcal{I'}$.
Then $f(x) = hton((N-i_0 \cdot 2^{12}) +( i_0 \cdot 2^{12} + j)) = hton(\offset 
+ x)$ with $\offset = N-i_0 \cdot 2^{12}$.

Now assume that ZMap still supports radix table, but the minimal size of the 
subnets $\mathcal{I'} \supseteq \mathcal{O}$ is greater than $2^{12}$.
Let $\mathcal{I''}$ such a minimal network.
Then
$\mathcal{I''} = \mathcal{I}_0 
\cup ... \cup \mathcal{I}_k$ where $\mathcal{I}_0,...,\mathcal{I}_k$ are the 
subnets of $\mathcal{I''}$ of size $2^{12}$ ordered by their prefix.
From the previous part of the proof, there exist $\offset_0$,...,$\offset_k$ 
such that $f(x) = hton(\offset_j + x)$ when $f(x)$ is in $\mathcal{I}_j$.
However, it is easy to notice that if we write $\offset_j = N_j - i_j \cdot 
2^{12}$, then $N_j = N_0 + j\cdot 2^{12}$ and $i_j = i_0+j$. Therefore 
$\offset_j = \offset_0$ for all $j \in \{1,...,k\}$.

Finally, assume that ZMap does not support radix computation or that the 
maximal size of the subnet $\mathcal{I'} \supseteq \mathcal{O}$ is less than 
$2^{12}$.
Then we have $f(x) = f_0(T, x)$.
Since $\mathcal{I}$ is a subnet of $\mathcal{S}$, $\mathcal{I}$ is included in 
a subnet $\mathcal{I'}$ represented by a leaf $l$ of $T$.
By definition $f_0(T, x)$ is the $j$th address of the subnet $\mathcal{I'}$, 
with $j = x - x_0$, and $x_0$ is the index of the first address of 
$\mathcal{I'}$.
Hence we have $f_0(T, x) = hton(N + j)$ with $N$ the host byte representation 
of the prefix $\mathcal{I'}$.
Therefore $f(x) = hton((N-x_0) + x) = hton(\offset + x)$ with $\offset = N-x_0$.
\end{proof}

At first glance, the second and the last hypotheses of the proof lead to the 
same result and it seems useless to fulfill the strongest hypothesis that 
$\mathcal{I}$ has size $\geq 2^{12}$.
There is, however, a difference that has an impact on the detection: when the 
monitored network is included in a subnet of size $\geq 2^{12}$ with no 
blacklisted address and ZMap supports radix table, then there are only $2^{20}$ 
possible values of $\emph{offset}$.
If the detection method requires to compute the list of indices from the observed 
addresses, then we may have to test $2^{20}$ offsets, instead of $2^{32}$ 
values in the last hypothesis of the proof.

\subsection{Detecting local scans}
\label{sec:detecting_local_scans}

The first detection method that we present can 
be applied to the scans for which $\mathcal{S} \subseteq \mathcal{O}$.
This case occurs when the observation network $\mathcal{O}$ is a large subnet 
of Internet (e.g. with prefix length 16), or if we want to detect small scans 
that are local to the observation network (e.g. scans of prefix length 24 in an 
observation network of prefix length 22).

We also assume that the size of the scan $n$ is close to the prime number $p$ 
used at the initialization of the scan.
This means that the subnet specified by the user (i.e. the 
whitelist $\mathcal{W}$) is a subnet of prefix length $8,16,24,28$ or $32$ and 
the blacklist $\mathcal{B}$ is disproportionately small compared to the size of 
the whitelist.

Let $o_1,o_2$, and $o_3$ be three consecutively observed IPv4 addresses.
As described in \Cref{sec:overview}, there is a sequence of internal 
states $s_{i_1}, s_{i_2}, s_{i_3}$ such that $f(s_{i_1})=o_1$, 
$f(s_{i_2})=o_2$, $f(s_{i_3})=o_3$.
From the hypothesis on the small difference between $n$ and $p$, there is a 
high probability that $s_{i_1+1}$ and $s_{i_1+2}$ are less than $n$.
Since $\mathcal{S} \subseteq \mathcal{O}$, we deduce that $f(s_{i_1+1})$, 
$f(s_{i_1+2})$ are in $\mathcal{O}$ and $i_2 = i_1 +1$, $i_3 = i_1 +2$, 
i.e. the internal states of consecutively observed addresses are also consecutive.
We can apply the following theorem to retrieve $g$.

\begin{theorem}
\label{th:tut}
Let $o_1, o_2, o_3$ three IPv4 addresses such that there exists an integer $i$ 
with $f(s_{i})=o_1$, $f(s_{i+1})=o_2$, $f(s_{i+2})=o_3$.
Then we have $$g = \frac{h_3 - h_1}{h_2 - h_1} - 1 \mod p$$ where $h_1$, $h_2$, 
$h_3$ are the host byte representations of $o_1$, $o_2$, $o_3$.
\end{theorem}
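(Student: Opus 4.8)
The plan is to turn the multiplicative recurrence on the internal states into a linear relation among the observed host byte values, and then eliminate both the unknown offset and the unknown state $s_i$ by taking two differences and one ratio. First I would record the affine link between addresses and states. Since we are in the local-scan regime, the whitelist is a single prefix and the blacklist is negligible, so $\mathcal{S}$ is (to the required approximation) a single subnet; applying \Cref{th:offset} with $\mathcal{I}=\mathcal{S}$ yields one integer $\offset$ with $f(x)=hton(x+\offset)$ for every $x$ whose image lies in $\mathcal{S}$. As $o_1,o_2,o_3\in\mathcal{S}$ and $o_k=f(s_{i+k-1})$, reading off host byte representations gives $h_k=s_{i+k-1}+\offset$, an identity that in particular holds modulo $p$.

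Next I would feed in the state recurrence. The hypothesis that the three states are consecutive gives $s_{i+1}\equiv g\,s_i$ and $s_{i+2}\equiv g^2 s_i \pmod p$. Substituting these and subtracting off $h_1$ makes the offset cancel in each difference:
\begin{equation*}
h_2-h_1\equiv (g-1)\,s_i,\qquad h_3-h_1\equiv (g^2-1)\,s_i \pmod p.
\end{equation*}
Both right-hand sides carry the common factor $s_i$, which is what lets me remove the remaining unknown.

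I would then form the ratio. Because $g$ is a primitive root of $\mathbb{F}_p$ and $p>2$ we have $g\neq 1$, and $s_i\in\mathbb{F}_p^{\ast}$ so $s_i\neq 0$; hence $h_2-h_1\equiv (g-1)s_i\not\equiv 0\pmod p$ and is invertible. Dividing the two congruences cancels $s_i$, and since $(g^2-1)/(g-1)=g+1$ I obtain $(h_3-h_1)/(h_2-h_1)\equiv g+1\pmod p$, which rearranges to the claimed formula $g=\frac{h_3-h_1}{h_2-h_1}-1 \bmod p$.

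The substantive step is the first one: securing a single offset valid simultaneously for all three observations. Everything afterward is routine linear algebra in $\mathbb{F}_p$, and the nondegeneracy conditions that keep the denominator nonzero ($g\neq1$, $s_i\neq0$) hold automatically. The place to be careful is therefore the appeal to \Cref{th:offset}: I must make sure the local-scan hypotheses genuinely place $o_1,o_2,o_3$ inside one subnet of $\mathcal{S}$, since if they straddled two leaves of $T$ the offset could jump and the cancellation of $\offset$ in the differences would fail.
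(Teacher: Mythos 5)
Your proposal is correct and follows essentially the same route as the paper's own proof: invoke \Cref{th:offset} to get the affine relation $h_k = s_{i+k-1} + \offset$, take differences to cancel the offset, and divide using $g^2-1=(g+1)(g-1)$ with $g\neq 1$ since $g$ is a primitive root. Your additional care about $s_i\neq 0$ and about why a single offset applies to all three observations only makes explicit what the paper leaves implicit.
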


\begin{proof}
From \Cref{th:offset}, $h_{j} = s_{j} + \offset$ for $j$ in $\{1,2,3\}$.
By definition of the ZMap iteration process, we have $h_i \equiv s_1 \cdot 
g^{i-1} + \offset \pmod p$
where the symbol $\equiv$ denotes the congruence relation in the modular arithmetic.  
We deduce that $h_2 - h_1 \equiv s_1 \cdot (g-1) \pmod p$ and $h_3 - h_1 \equiv 
s_1 \cdot (g^2-1) \pmod p \equiv s_1 \cdot (g+1)\cdot(g-1) \pmod p$.
Since $g$ is a primitive root, $g \neq 1$ and we have the result.
\end{proof}

The description of the detection method is given in \Cref{algo:det1}.
Note that it does not require any hypothesis on the offset 
value of the mapping function $f$.
Once $g$ is known, it is possible to retrieve this value by computing $s_i = 
(h2-h_1)/(g-1) \mod p$ and $\mathit{offset} = h_1 - s_i$.
This gives additional information on the scanned addresses set $\mathcal{S}$ by 
using \Cref{th:offset}.

\begin{algorithm}
\caption{Detection method $Det1(p,(o_j)_{j\in \{1,...,m\}})$}
\begin{algorithmic}
  	\STATE Let $h_j$ the host byte representation of $o_j$ for $j\in\{1,...,m\}$
  	\STATE $g \leftarrow \frac{h_3 - h_1}{h_2 - h_1} - 1$
  	\STATE $s \leftarrow \frac{(h2-h_1)}{(g-1)} \mod p$
  	\STATE $\offset \leftarrow h_1 - s$  	
  	\FOR{$j \in \{4,..,m\}$}
  	  	\IF{$h_j \neq ((s \cdot g^j) \mod p + \offset)$}
  			\STATE \textbf{return} "not a ZMap scan"
  		\ENDIF
  	\ENDFOR
  	\STATE \textbf{return} $g, \offset$
 
\end{algorithmic}
\label{algo:det1}
\end{algorithm}

\subsection{Detecting Internet-wide scans}
\label{sec:detecting_internet_wide_scans}

We propose a second detection method that can be applied on scans whose scope 
is much wider than the observation network.
Even if the monitored network size has an impact on the computation time of the 
detection algorithm (see \Cref{subsec:offset_brute_forcing} and \Cref{sec:synthetic}), it is 
possible to detect internet-wide scans by using small monitored networks.
The method requires at least 20 packets for a high probability of success (see 
below).
Contrary to the previous method, it does not require consecutive packets.
In other words, packet loss does not impact the success of this method.

This detection method relies on the following Theorem:
\begin{theorem}\label{th:det2}
Let ($o_1,...,o_m$) a sequence of scanned IPv4 addresses ordered according to 
their packet arrival times,
and ($h_1,...,h_m$) their host byte representations.
For $b$ in $\mathbb{F}_{p}$, let $log_a(b)$ (or $log(b)$ for simplicity) the 
integer $x<p$ such that $b = a^x \mod p$ where $a$ denotes a predefined primitive 
root of $\mathbb{F}_{p}$ (i.e. $a=2$ or $a=3$ according to the value of 
$p \in \{p_0,...,p_4\}$).
If ($o_1,...,o_m$) is extracted from a ZMap scan, then there is an integer 
$k<p-1$ coprime with $p-1$ such that sequence ($e_2,...,e_m$) defined by
$$e_j = log\left(\frac{h_j-\offset}{h_1-\offset} \mod p\right) \cdot k^{-1} 
\mod p-1$$ is strictly increasing.

Furthermore, suppose that $k$ satisfies the property above,
and let $r = gcd(e_2,...,e_m)$ and $k' = k \cdot r$.
Then, with some additional assumptions on the discrete log distribution, we 
have
\begin{multline*}
\Pr[(o_1,...,o_m)  \; is \; a \; ZMap\; scan \; and \; k' = log(g)] \\ \geq    \left(1 - \frac{1}{(m-1)!}\right)^{\phi(p-1)} 
\end{multline*}
where $\Pr$ is the probability measure and $\phi$ is the Euler's totient function.
\end{theorem}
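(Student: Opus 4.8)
The plan is to treat the two assertions separately. For the first (existence of a valid $k$), I would exhibit $k=\log(g)$ explicitly and check the two required properties. Since $g$ is a primitive root of $\mathbb{F}_p$ and $g=a^{\log(g)}$, the standard characterization of primitive roots as the powers $a^x$ with $\gcd(x,p-1)=1$ shows that $\log(g)$ is coprime with $p-1$; moreover $g\neq1$ forces $\log(g)\in\{1,\dots,p-2\}$, so $k<p-1$. To evaluate $e_j$, I would use \Cref{th:offset} to write $h_j-\offset=s_{i_j}$, where $s_{i_j}$ is the internal state mapping to $o_j$, and combine it with the multiplicative update $s_{i_j}=s_1\cdot g^{\,i_j-1}\bmod p$. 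This gives $\frac{h_j-\offset}{h_1-\offset}=g^{\,i_j-i_1}\bmod p$, hence $\log\!\left(\frac{h_j-\offset}{h_1-\offset}\right)=(i_j-i_1)\log(g)\bmod(p-1)$ and, after multiplying by $k^{-1}=\log(g)^{-1}$, $e_j=(i_j-i_1)\bmod(p-1)$. Because the $o_j$ are ordered by arrival time, and under the method's assumption that this matches the generation order the step indices satisfy $i_1<\cdots<i_m$ with each $i_j-i_1<p-1$, the sequence $(e_2,\dots,e_m)$ coincides with the strictly increasing index differences. This settles the first claim.

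For the second part I would first identify which admissible $k$ recover $\log(g)$ exactly. Writing $d_j=i_j-i_1$, a candidate of the form $k=\log(g)\cdot c^{-1}\bmod(p-1)$ with $c$ a positive integer small enough that $c\,d_m<p-1$ yields $e_j=c\,d_j$ with no wraparound; this sequence is strictly increasing, and $r=\gcd(c\,d_2,\dots,c\,d_m)=c\cdot\gcd(d_2,\dots,d_m)$. Under the mild condition $\gcd(d_2,\dots,d_m)=1$, which holds with high probability for sampled indices and which I would fold into the ``additional assumptions on the discrete log distribution,'' we get $r=c$ and therefore $k'=k\cdot r=\log(g)$. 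Thus every such structured candidate recovers the generator correctly, and the detector can only return a wrong value if some \emph{other} $k$ coprime with $p-1$ accidentally produces a strictly increasing sequence.

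The core of the argument is then a probabilistic bound on those accidental candidates. Invoking the assumed discrete log distribution, I would model, for a candidate $k$ not of the structured form, the $m-1$ values $e_j(k)=\log\!\left(\frac{h_j-\offset}{h_1-\offset}\right)k^{-1}\bmod(p-1)$ as independent and approximately uniform on $\{0,\dots,p-2\}$. The probability that they fall in strictly increasing order is the probability of one prescribed ordering out of $(m-1)!$ equally likely ones, namely $\frac{1}{(m-1)!}$, so each wrong candidate fails to be increasing with probability $1-\frac{1}{(m-1)!}$. Treating the wrong candidates as independent and bounding their number by the total count of admissible ones, $\phi(p-1)$, the probability that none of them yields an increasing sequence is at least $\bigl(1-\frac{1}{(m-1)!}\bigr)^{\phi(p-1)}$; here the inequality is in the correct direction because $1-\frac{1}{(m-1)!}<1$ and replacing the true (smaller) number of bad candidates by $\phi(p-1)$ only raises the exponent. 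On this event the detector both confirms a genuine scan and returns $k'=\log(g)$, which is the stated bound.

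The main obstacle is exactly this probabilistic step: the quantities $e_j(k)$ across different $k$ are all deterministic functions of the same handful of discrete logs, so they are neither genuinely independent across candidates nor exactly uniform, and ties must be argued to contribute negligibly. Making this rigorous is what the hypothesis ``with some additional assumptions on the discrete log distribution'' is meant to absorb, and I would state that assumption (independence and near-uniformity of the relevant logs) precisely before applying the product bound. A secondary point requiring care is the no-wraparound and $\gcd(d_2,\dots,d_m)=1$ conditions used for the structured candidates, since together they guarantee that at least one correct candidate exists and that the $\gcd$ correction exactly cancels the scaling ambiguity $c$.
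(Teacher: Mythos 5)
Your proposal follows essentially the same route as the paper's proof: you exhibit $k=\log(g)$ so that $e_j$ equals the increasing index differences $i_j-i_1$, you identify the same family of structured candidates $k=\log(g)\cdot c^{-1}\bmod(p-1)$ (the paper's set $A$) whose gcd correction recovers $\log(g)$ under the condition $\gcd(d_2,\dots,d_m)=1$, and you bound the accidental candidates by the same equidistribution-plus-independence heuristic giving $\bigl(1-\frac{1}{(m-1)!}\bigr)^{\phi(p-1)}$. The only differences are cosmetic (you explicitly check coprimality of $\log(g)$ with $p-1$ and state the no-wraparound condition more carefully), so this matches the paper's argument.
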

\begin{proof}
Let ($i_1,..., i_m$) the sequence of 
 state indexes such that $h_j = s_{i_j} + 
\offset = ((s_0 \cdot g^{i_j}) \mod p) + \offset$ for $j$ in $\{1,...,m\}$.
Let $\delta i_j = i_j - i_1$ for $j$ in $\{2,...,m\}$.
$\delta i_2,...,\delta i_m$ is an increasing sequence and if $k=log(g)$,
then $\delta i_j = e_j$.
This proves the existence of $k$.

Notice that if $\delta i_m \ll p$ and if $r$ is an integer with 
$r<\frac{p}{\delta i_m}$ and $r$ coprime with $p-1$, then the sequence $r \cdot 
\delta i_2 ,...,r \cdot\delta i_m$ is also increasing and $r \cdot \delta i_j = 
e_j$ for $k= log(g) \cdot r^{-1} \mod (p-1)$.

Let
\begin{multline*}
A = \{log(g) \cdot r^{-1} mod (p-1) \mid r<\frac{p}{\delta i_m}\; \\ and \; r \; coprime \; with \; p-1\}
.
\end{multline*} 
If $k\in A$, by assuming that $gcd(\delta i_2,...,\delta i_m)=1$ then we have $r = gcd(e_2,...,e_m)$ and therefore $log(g) = gcd(e_2,...,e_m) \cdot k$.

Now suppose that $k\not\in A$ and we want to estimate the probability that $k$ 
satisfies the property of the theorem ($A$ is the empty set if $o_1,...,o_m$ 
are not generated by a ZMap scan).
We assume that for any $k$, the sequence $(e_j)_{j\in \{2,...,m\}}$ defined by $k$ is equidistributed.
This assumption stems from the fact that discrete logarithm has an uniform 
distribution on arithmetic subsets of $\mathbb{F}_p$ when $p\rightarrow \infty$ 
as stated in \cite{gibson2012discrete}. Let $\Pr_k(e_2<...<e_m)$ the probality that the sequence ($e_2$,...,$e_m$) defined for a fixed $k$ is increasing.
Then for $k\not\in A, \Pr_k(e_2<...<e_m) = \frac{1}{(m-1)!}$ as there are 
$(m-1)!$ permutations of $\{2,...,m\}$ and each ordering of $(e_j)_{j\in 
\{2,...,m\}}$ can be defined by one of these permutations.
We also assume that the probabilities that $(e_j)_{j\in \{2,...,m\}}$ is not 
increasing are independent for (almost) all $k$.
Hence, the probability that for all $k\not\in A$, $(e_j)_{j\in \{2,...,m\}}$ is 
not increasing can be bounded from below by

\begin{multline*}
\prod_{k \; coprime \; with \; (p-1)} Pr_k[(e_j)_{j\in \{2,...,m\}}\; is \; not
\; \\ increasing] =  	 \left(1 - \frac{1}{(m-1)!}\right)^{\phi(p-1)}
.
\end{multline*}

Therefore, if $k$ defines an increasing sequence $(e_j)_{j\in \{2,...,m\}}$, then
$$Pr[k \in A] \geq  \left(1 - \frac{1}{(m-1)!}\right)^{\phi(p-1)}.$$

\end{proof}

The description of the ZMap detection algorithm is given in \Cref{algo:det2}.

\begin{algorithm}[H]
\caption{Detection method $Det2(p,\mathit{offset},(o_j)_{j\in \{1,...,m\}})$}
\begin{algorithmic}
  	\FOR{$j\in \{2,...,m\}$}
  		\STATE $f_j \leftarrow log_a\left(\frac{h_j-\offset}{h_1-\offset} \mod p\right)$ with $h_j$ the host byte representation of $o_j$ 
  	\ENDFOR
  	\FOR{$k \in \{1,...,p-1\}$ and $k$ coprime with $p-1$}
  		\FOR{$j \in \{2,..,m\}$}  		
  			\STATE $e_j \leftarrow f_j \cdot k$
  			\IF{$e_j \leq e_{j-1}$}
  				\STATE \textbf{break}
  			\ENDIF	
  			\IF{$j = m$}
  				\STATE $r \leftarrow gcd(e_2,...,e_m)$
  				\STATE \textbf{return} $g = a^{k \cdot r}$
  			\ENDIF
  		\ENDFOR
  	\ENDFOR
  	\STATE \textbf{return} "not a ZMap scan"
 
\end{algorithmic}
\label{algo:det2}
\end{algorithm}

The discrete logarithm $log_a$ can be efficiently computed by using baby-step 
giant-step algorithm \cite{shanks1971class}	or Pohlig-Hellman algorithm 
\cite{pohlig1978improved}.
The computation complexity of the discrete logarithm is $\mathcal{O}(\sqrt{p})$ 
for baby-step giant-step and $\mathcal O\left(\sum_i {y_i(\log n+\sqrt 
x_i)}\right)$ with $n=p-1=\prod_i x_i^{y_i}$ for Pohlig-Hellman.
Since the value of $p$ is at most $p_4=2^{32}+15$, we can expect to perform the 
computation respectively in $2^{16}$ or $2^{10}$ operations.
The iteration of $k$ amongst the numbers coprime with $p-1$ can also be 
efficiently performed by precomputing the list of such coprime numbers and 
storing them into a file.
During the main iteration loop of \Cref{algo:det2}, the file takes at most 
$4 \cdot \phi(p_4-1) =  4,5$ GB in memory.
Actually, the critical part of the algorithm is the multiplication of the 
$f_j$ by $k$ which is performed $j_m \cdot \phi(p-1)$ times in the worst case of 
a full iteration over the coprime integers ($j_m$ is the mean value of the least 
$j\leq m$ such that $e_j \leq e_{j-1}$, with the assumptions on the discrete 
log distribution of \Cref{th:det2} we can show that $j_m = e \approx 
2.72$).
This results in $j_m \cdot \phi(p-1) = 2.72 \cdot 1136578560 \approx 2^{32}$ 
operations with $p=p_4$.

The number of IPv4 addresses $m$ extracted from the scan has also an impact on 
the success of the detection method.
Let $\theta(m,p) = 1 - \left(1 - \frac{1}{(m-1)!}\right)^{\phi(p-1)}$ the upper 
bound of the failure probability of \Cref{th:det2} with $m$ addresses.
In the context of ZMap scan detection, this failure probability corresponds to 
the false positive rate.
For $p_4 = 2^{32}+15$, we have $\theta(14,p_4) = 0.16$, $\theta(15,p_4) = 
0.013$, $\theta(17,p_4) = 5.4 \cdot 10^{-5}$, and  $\theta(20,p_4) = 9.3 \cdot 
10^{-9}$.
As we test a high volume of scans in our experiments, we select $m=20$ for our 
implementation of the detection tool to avoid false positives.
Also note that, when the $\offset$ value is known, the false negative rate is 
equal to $0$.

\subsection{Offset computation}
\label{subsec:offset_computation}

The drawback of the detection method $Det2$ is that it requires to know the 
$\offset$ value of the generated IPv4 addresses (\Cref{subsec:observation_network}).
However, this value can be retrieved in two special cases.
The first case is when the scan does not use any blacklist (i.e. $\mathcal{B} = 
\emptyset$).
The second case is when the scan uses the default ZMap 
blacklist (i.e $\mathcal{B} = \mathcal{B}_{ZMap}$ \cite{zmap_blacklisting}).
Let $o_1,...,o_m$ be a sequence of IPv4 addresses to characterize.
For each whitelist $\mathcal{W}_k$ that corresponds to a subnet $t_k$ of prefix 
length $k$ containing $o_1,...,o_m$, we can compute the two corresponding 
$\offset$ values $\offset_{k}^{\mathcal{B}_{ZMap}}$ and $\offset_{k}^{\emptyset}$.
Then, for each $k \in \{0,...,24\}$, we apply $Det2$ with the inputs 
$o_1,...,o_m$, $\offset_{k}^{i}$ and $p_{r(k)}$ where $p_{r(k)}$ is the prime 
number associated to the size of the network $t_k$.
This overall detection method managed to identify and characterize a large 
amount of ZMap scans, as we will see later in our experiments (see 
\Cref{sec:results}).

\subsection{Offset brute forcing}
\label{subsec:offset_brute_forcing}

When ZMap uses a custom blacklist, the detection and the characterization of 
the scan can be performed by brute forcing the $\offset$ value.
If we assume that ZMap supports radix table for IPv4 computation as described 
in \Cref{sec:overview} and that the observation network $\mathcal{O}$ 
is included in a subnet $\mathcal{I}\subseteq \mathcal{S}$ of size $\geq 
2^{12}$ (which is the case when $\mathcal{O}$ is a subnet with no blacklisted 
IP and $\mathcal{S}$ is an Internet-wide scan), we have shown in 
\Cref{th:offset} that there are $2^{20}$ potential values for $\offset$.
Therefore, there are $2^{20}$ calls to $Det2$ for an exhaustive search of the 
offset value.
For each call of \Cref{algo:det2}, it is also possible to 
reduce the number of integers $k$ coprime with $p-1$ that are tested: 

\begin{theorem}\label{th:brute_force}
Assume that ($o_1,...,o_m$) is a sequence of ZMap probed IPv4 addresses in an 
observation network $\mathcal{O}$ which is included in the total scanned 
addresses $\mathcal{S}$.
Let $k$ a random number drawn uniformly from the integers in $\{1,...,p-1\}$ 
coprime with $p-1$ and $\Gamma$ the property on $k$ defined in \Cref{th:det2}.
Then $$Pr[k\text{ satisfies }\Gamma] \approx \frac{|\mathcal{O}|}{(m-1) \cdot 
|\mathcal{S}|}.$$ 
\end{theorem}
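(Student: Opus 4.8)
The plan is to reduce the property $\Gamma$ to a purely arithmetic statement about a single uniformly random unit modulo $p-1$, and then to count the ``good'' units geometrically before translating the count into the network sizes.

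First I would rewrite the tested sequence in terms of the state indices. By \Cref{th:offset} and the ZMap update rule, $h_j - \offset = s_{i_j} = s_0\,g^{i_j} \mod p$ for the true generator $g$, so, writing $\delta i_j = i_j - i_1$,
$$log\left(\frac{h_j - \offset}{h_1 - \offset} \mod p\right) = (i_j - i_1)\,log(g) = \delta i_j \cdot log(g) \mod (p-1).$$
Setting $c = log(g)\cdot k^{-1} \mod (p-1)$, the sequence of \Cref{th:det2} becomes $e_j = \delta i_j \cdot c \mod (p-1)$. Because $g$ is a primitive root, $log(g)$ is coprime with $p-1$, so the map $k \mapsto c$ is a bijection of the units modulo $p-1$; drawing $k$ uniformly among the coprime residues is therefore the same as drawing $c$ uniformly among them. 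The quantity to estimate is thus $\Pr_c[\,\delta i_2 c < \cdots < \delta i_m c \mod (p-1)\,]$, where the gaps satisfy $0 < \delta i_2 < \cdots < \delta i_m$ since the observed addresses arrive in increasing iteration order.

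Next I would isolate the dominant family of good units. If $c\,\delta i_m < p-1$, then no product $\delta i_j c$ is reduced modulo $p-1$, so $e_j = \delta i_j c$ and the sequence is automatically strictly increasing (as $c>0$ and the gaps increase). Hence $\Gamma$ holds for every unit $c$ in $[1,(p-1)/\delta i_m)$, and by equidistribution of the residues coprime with $p-1$ this interval contains about $\frac{\phi(p-1)}{p-1}\cdot\frac{p-1}{\delta i_m} = \frac{\phi(p-1)}{\delta i_m}$ units, contributing probability $\approx 1/\delta i_m$. The remaining ``wrap-around'' units behave like the generic case of \Cref{th:det2}, each giving an increasing sequence with probability only $1/(m-1)!$; since here $\delta i_m \le p \approx 2^{32} \ll (m-1)!$, their total contribution $\approx 1/(m-1)!$ is negligible and $\Pr[\Gamma]\approx 1/\delta i_m$. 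Finally I would express $\delta i_m$ through the network sizes: since $\mathcal{O}\subseteq\mathcal{S}$, all $|\mathcal{O}|$ addresses of $\mathcal{O}$ are scanned and sit among the $|\mathcal{S}|$ states that yield addresses, so consecutive $\mathcal{O}$-hits are spaced by $\approx |\mathcal{S}|/|\mathcal{O}|$ steps and $m$ consecutive observations span $\delta i_m \approx (m-1)\,|\mathcal{S}|/|\mathcal{O}|$. Substituting yields $\Pr[\Gamma]\approx |\mathcal{O}|/((m-1)|\mathcal{S}|)$.

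The main obstacle is not the algebra but justifying the two heuristic approximations. First, one must argue that the good units are essentially exhausted by the non-wrapping interval, which requires controlling both the equidistribution of units inside a possibly short interval and the $1/(m-1)!$ tail of wrap-around solutions. Second, one must argue that the $\mathcal{O}$-hits are spread uniformly enough over the $|\mathcal{S}|$ active states for $\delta i_m$ to concentrate around its mean $(m-1)|\mathcal{S}|/|\mathcal{O}|$. Both rest on the pseudorandomness of the ZMap generator, i.e. the same equidistribution assumption the paper already invokes via \cite{gibson2012discrete} for \Cref{th:det2}, so the cleanest route is to carry that assumption through explicitly and present the result as an expected-value estimate.
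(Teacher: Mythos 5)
Your proposal is correct and takes essentially the same approach as the paper: your interval of non-wrapping units $c \in [1,(p-1)/\delta i_m)$ is precisely the paper's set $A$ pulled back through the bijection $k \mapsto log(g)\cdot k^{-1}$, the wrap-around units are dismissed by the same $1/(m-1)!$ heuristic, and $\delta i_m$ is estimated by the same mean-gap (law of large numbers) argument giving $(m-1)\,|\mathcal{S}|/|\mathcal{O}|$. The only cosmetic difference is that your explicit change of variables makes the counting assumption (equidistribution of residues coprime with $p-1$ in a short interval) slightly more transparent than the paper's phrasing in terms of the discrete-log distribution.
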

\begin{proof}
Let ($o_1,...,o_l$) a sequence of probed IPv4 addresses in the observation network 
and ($i_1,...,i_l$) the corresponding indices of the states, i.e. $o_j = 
hton(s_{i_j} + \offset)$ for all $j$ in $\{1, ..., l\}$.

If we denote $\gamma_{j} = i_{j} - i_{j-1}$ and 
$\gamma_{mean} = mean \{\gamma_{j} \mid 2 \leq j \leq l \}$, 
then we can notice that for $l$ 
sufficiently large, $\gamma_{mean} = \frac{|\mathcal{S}|}{|\mathcal{O}|}$.
Since the sequence of $\gamma_j$ can be viewed as a sequence of independent and 
identically distributed random variables of mean $\gamma_{mean}$ and $\delta i_m 
= i_m - i_1 = \gamma_2 + ... + \gamma_m$, by the law of large numbers we can 
approximate $\delta i_m$ by $\hat{\delta} i_m = (m-1) \cdot \gamma_{mean}$.
Let
$$A = \{log(g) \cdot r^{-1} \mid r<\frac{p}{\delta i_m}\; and \; r \; coprime \; with \; p-1\}.$$
As we have shown in \Cref{th:det2}, if $k$ is in $A$ then $k$ satisfies $\Gamma$,
and the probability that $k$ satisfies $\Gamma$ if $k$ is not in $A$ is very 
small.
Therefore
$$Pr[k\text{ satisfies }\Gamma] \approx Pr[k \in A].$$
From our hypothesis on the discrete logarithm distribution, we can assume that 
$A$ is equally distributed in the integers coprime with $p-1$.
Hence $$Pr[k \in A] = \frac{|A|}{\phi(p-1)} = \frac{1}{\delta i_m} \approx  
\frac{|\mathcal{O}|}{(m-1) \cdot |\mathcal{S}|}.$$
\end{proof}

\Cref{th:brute_force} underlines the fact that the size of the 
observation network $\mathcal{O}$ has an impact on the computation time of 
\Cref{algo:det2}.
If ($o_1,...,o_m$) is a sequence of ZMap probed IPv4 addresses then the mean 
number of modular multiplications $f_j \cdot k$ is equal to $\frac{e\cdot (m-1) 
\cdot |\mathcal{S}|}{|\mathcal{O}|}$.
For example, on an Internet-wide scan where 
$\mathcal{S} = \mathcal{S}_{Internet}=\{\text{"0.0.0.0"},...,"\text{255.255.255.255"}\}$, 
if the observation network $\mathcal{O}$ is a /24 
subnet and the number $m$ of ZMap probed addresses is equal to $20$, then the mean number of operations of  
\Cref{algo:det2} is approximately equal to $19 \cdot 2.72 \cdot 2^{24} = 2^{30}$, and if $\mathcal{O}$ is a /16 subnet then the mean number of operations is approximately equal to $2^{22}$.

\Cref{th:brute_force} also shows that we can improve the computation time 
of \Cref{algo:det2} by reducing the number of tested integers $k$ 
accordingly to a confidence threshold $\alpha$ which will correspond to the 
false negative rate. Let $T_{k_{max}}$ the first $k_{max}$ integers coprime with $p-1$.
By applying \Cref{th:brute_force}, if ($o_1,...,o_m$) 
is a ZMap scan and the integers $k$ tested in \Cref{algo:det2} are restricted to $T_{k_{max}}$, then
\begin{multline*}
1-\alpha = Pr[\exists k\in T_{k_{max}} \text{s.t. } 
k \text{ satisfies } \Gamma ] \\ \geq 1- \left(1 - \frac{|
\mathcal{O}|}{(m-1) \cdot |\mathcal{S}|} \right)^{k_{max}}
\end{multline*} where $\alpha$ is the false negative rate of the restricted algorithm.

By fixing the false negative rate $\alpha$, the minimum number of integers 
$k_{max}$ to test is equal to
$$k_{max} \approx \frac{log(\alpha)}{log(1-\frac{|\mathcal{O}|}{(m-1) \cdot |\mathcal{S}|})}.$$
If we apply the restricted algorithm $2^{20}$ times for an exhaustive search of the $\offset$ value, then the overall false negative rate is equal to $1-(1-\alpha)^{2^{20}}$ and the mean number of operations is equal to $k_{max} \cdot e \cdot 2^{20}$.

Let consider the case of the detection and the characterization of ZMap 
Internet-wide scans with a custom blacklist of IP addresses where $\mathcal{S} = 
\mathcal{S}_{Internet}$ and $\mathcal{O}$ is a /16 subnet.
If we want to achieve a false negative rate $\alpha=10^{-8}$ for the restricted 
version of \Cref{algo:det2}, then we can limit the number of integers coprime 
with $p-1$ to $k_{max} = 2^{25}$.
The overall false negative rate is equal to $0.01$ and the overall number of 
operations for each scan is equal to $2^{46}$.
This takes roughly $15$ days on a single core of a Intel Xeon E7-4820.

\subsection{Summary}

We summarize the detection methods' usages regarding scan types in \Cref{table:table1}.

\begin{table}[h!]
\centering
\captionof{table}{Scan types and detection methods.}
\label{table:table1}
\begin{tabular}{|p{4cm}|m{4cm}| }
  \hline
  \textbf{Scan types} & \textbf{Detection method} \\ \hline
  /24 scans or large observation network (>/16) & $Det1$  \\ \hline
  Scans with default or empty blacklists & $Det2$ with offset computation \\ \hline
  Scans with custom blacklists & $Det2$ with offset brute forcing  \\ \hline
\end{tabular}
\end{table}

\section{Results}
\label{sec:results}

This section presents our identification and characterization results using 
$Det2$ and pre-computed offsets on synthetic and real network traffic.
We do not run the $Det1$ method because the odds of observing three
consecutive packets inside our datasets (see \Cref{subsubsec:datasets}) are very low.
We also do not run the offset bruteforcing method because computing costs are 
too expensive (see \Cref{subsec:offset_brute_forcing}).
The method used hereafter in this section is $Det2$ with pre-computed offsets.

\subsection{Synthetic data}
\label{sec:synthetic}

In this section, we analyze the detection performance of the method $Det2$ 
using synthetic data.
We collect destination IP address sequences using the -d (or --dryrun) option of 
ZMap. 
For each prefix size between $0$ and $24$, we generate $100$ sequences of ZMap 
scans with different seeds.
Then, we restrict each of these sequences to the IPv4 addresses that lay into 
the observation network of the given prefix size.
We also ensure that we generate enough packets so that each scan contains $20$ 
packets in the observation network.

\subsubsection{Size of observed prefix $\mathcal{O}$ vs detection runtime}
\label{subsubsec:size_observed_prefix_vs_runtime}

In order to evaluate $Det2$ runtime, we measure the mean number of 
coprimes tested in the main loop of \Cref{algo:det2} for different 
observation network sizes.
We use a single thread to generate the packets.
The detection method $Det2$ identify all the ZMap scans and retrieve the 
correct generator $g$.

\begin{figure}[t!]
  \centering
  \includegraphics[width=1.0\columnwidth]{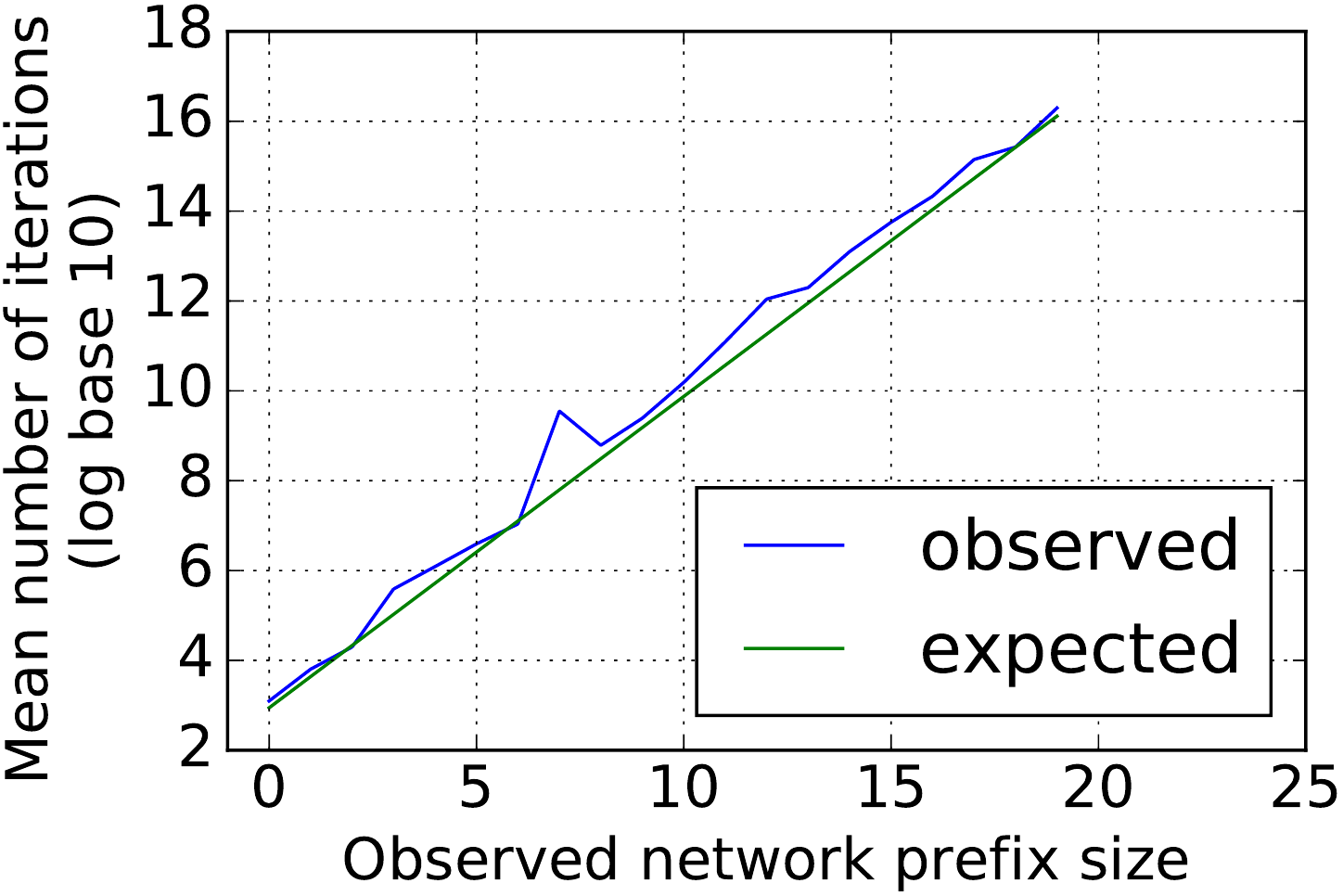}
  \caption{Mean number of coprimes tested for several sizes of observed prefix $\mathcal{O}$.}
  \label{fig:it_nb}
\end{figure}

\Cref{fig:it_nb} describes the mean number of iterations against the 
sizes of observed prefix $\mathcal{O}$.
We observe that the empirical results are similar to those obtained with the 
formula: $t(\mathcal{O}) = (m-1)\cdot \frac{|\mathcal{S}|}{|\mathcal{O}|}$.
This validates the approximation of the probability in \Cref{th:brute_force}.

\subsubsection{Number of threads vs detection success}
\label{subsec:threads}

\begin{figure}[t!]
  \centering
  \includegraphics[width=1.0\columnwidth]{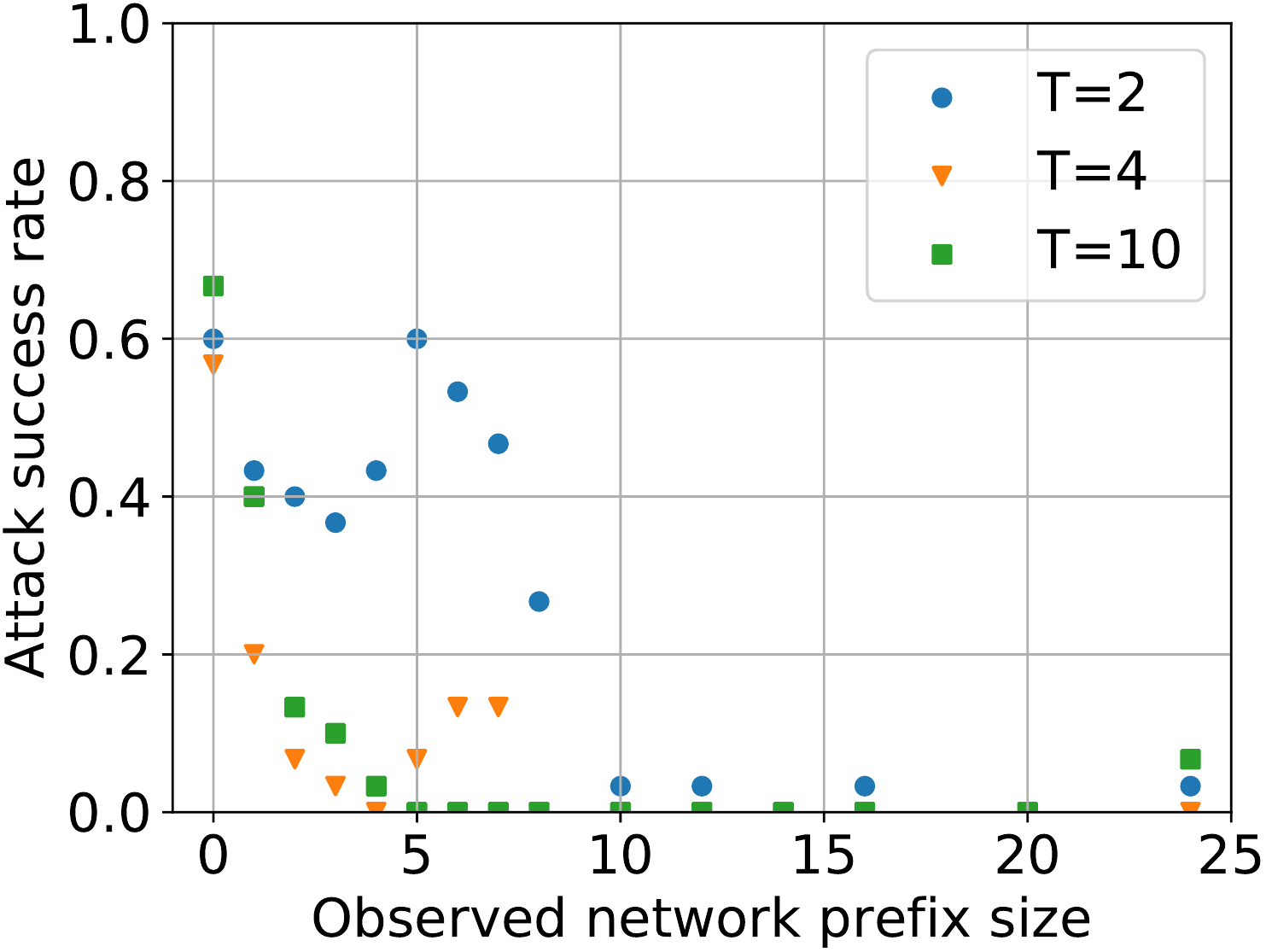}
  \caption{$Det2$ success rate for 2, 4 and 10 threads for several sizes of 
    observed prefix $\mathcal{O}$.}
  \label{fig:multithreading_success}
\end{figure}

ZMap uses an option (-T) to control the number threads that send probing packets.
Threads use sharding (see \Cref{sec:internal_state}) to share packet 
sending.
Using several threads may alter the packet sending order and thus impacts 
detection success.
Before December 6th, 2016 \cite{zmap_default_thread_number}, the default number 
of threads was the number of cores of the machine.
In the current version of ZMap (post \cite{zmap_default_thread_number}), the default 
number of thread is one.

\Cref{fig:multithreading_success} depicts the method success rate against the 
number of threads used for several sizes of observed prefix $\mathcal{O}$. 
We here use the first 20 packets observed as input to the $Det2$ method.
We observe that smaller numbers of threads yield higher success rates.
This is expected because when the thread number increases, the odds of 
observing a sequence from distinct threads that cannot be identified also 
increase.
Bigger prefix sizes yield a smaller success rate for all 
configurations.
The first 20 packets are quicker to obtain with small prefix sizes.
We thus hypothesize that, for small prefix sizes, observed packets are often sent by 
the first launched thread.
As the prefix size increases, the first 20 packets are increasingly sent by 
distinct threads which alter the original order.

\subsection{Real world traffic}
\label{subsec:real_world_traffic}

This section presents our identification and characterization results on real-world data.

\subsubsection{Datasets}
\label{subsubsec:datasets}

We use two real-world network traffic datasets.
The first dataset was collected using a network telescope from June 2014 to 
January 2015.
The observed prefix changes along time, from a single /24 to three
adjacent /24 prefixes inside a /22.
The second dataset is the MAWI repository \cite{mawi}.
This dataset is collected on a Japanese backbone link.
We use the multi-day long traces captured during the Day In The Life of 
Internet (DITL) event that took place on April 11th and 12th in 2017.
Unlike the publicly available MAWI traces where IP addresses are anonymized, 
we use original IP addresses.
We restrict our observation to a single /16 to fulfill the hypothesis on the 
observation network $\mathcal{O}$ from \Cref{subsec:observation_network}.

\subsubsection{Pre-processing}
\label{sec:pre_processing}

We identify scanning IP addresses using two methods.
First, we extract source IP address that have at least a packet with the ZMap 
fingerprint (IP ID field equals to 54321).
Second, we apply the Threshold Random Walk \cite{Jung2004SP} method using 
Zeek/Bro.
We then compute the union of source IP addresses extracted by the methods 
above, and keep only the IP addresses that send packets to at least 5 distinct 
destination IPs.
Individual scans performed by scanning IPs are then identified using a burst 
detection method \cite{Kleinberg2003Bursty}.
\Cref{table:data} presents detected scans.
We breakdown scans with more and less than 20 packets because the $Det2$ method 
needs at least 20 packets (see \Cref{sec:detecting_internet_wide_scans}).
Scan with at least 95\% of packets with the ZMap fingerprint (see above)
are labelled as ZMap in IP ID-related column hereafter.
Overall, we identify 1,314,804 scans, and apply the $Det2$ method on 
639,599 ones.

\begin{table}[t!]
  \setlength\tabcolsep{3pt}
  \centering
  
  \captionof{table}{Scan data breakdown.}
  
  \label{table:data}
  \begin{tabular}{llrrrrrrrrrr}
    \toprule
    \textbf{\# IP}
    & \textbf{IP ID}
    & \multicolumn{2}{c}{\textbf{Telescope}} 
    & \multicolumn{2}{c}{\textbf{Backbone}} \\
    
    \cmidrule(lr){3-4}
    \cmidrule(lr){5-6}
    
    &
    & \textbf{TCP} & \textbf{UDP}
    & \textbf{TCP} & \textbf{UDP} \\
        
    \midrule
    
    $<$ 20 IP  &        & 546,259 & 11,599 & 717,643 & 39,303 \\
    \midrule
    $>=$ 20 IP             
               & ZMap   &  12,200 &  6,944 &  17,737 &  2,869 \\
               & Other  & 197,421 &    612 & 391,901 &  9,915 \\
    
    \cmidrule(lr){2-6}
               & $\sum$ & 209,621 &  7,556 & 409,638 & 12,784 \\
    \bottomrule
  \end{tabular}
  
\end{table}

\subsubsection{Identification results}
\label{subsec:general}

\begin{table}[t!]
  \setlength\tabcolsep{1.5pt}
  \centering
  
  \captionof{table}{ZMap identification results. $\mathcal{B}_{ZMap}$ is the default list used by ZMap described at \cite{zmap_blacklisting}, $\emptyset$ means that the scan does not use any blacklist and $\mathcal{B}_{ZMap} \land \emptyset$ represents both default blacklist and lack thereof.
  }

  \label{table:results}
  \begin{tabular}{lccrrrrrrrrrr}
    \toprule
    \multirow{4}{*}{\rotatebox{90}{\textbf{IP ID field}}}
    & \multicolumn{2}{c}{\textbf{\textit{Det2}}}
    & \multicolumn{4}{c}{\textbf{Telescope}} 
    & \multicolumn{4}{c}{\textbf{Backbone}} \\
    
    \cmidrule(lr){2-3}
    \cmidrule(lr){4-7}
    \cmidrule(lr){8-11}
    
    & \multirow{2}{*}{\rotatebox{90}{\parbox{1.3cm}{\textbf{Result}}}}
    & \multirow{2}{*}{\rotatebox{90}{\parbox{1.3cm}{\textbf{Blacklist used}}}}
    & \multicolumn{2}{c}{\textbf{TCP}} & \multicolumn{2}{c}{\textbf{UDP}} 
    & \multicolumn{2}{c}{\textbf{TCP}} & \multicolumn{2}{c}{\textbf{UDP}} \\
    
    \cmidrule(lr){4-5}
    \cmidrule(lr){6-7}
    \cmidrule(lr){8-9}
    \cmidrule(lr){10-11}
    
    &
    &
    & \# & \% & \# & \% 
    & \# & \% & \# & \% \\
    \\[0.1cm]
        
    \midrule
    
    \multirow{7}{*}{\rotatebox{90}{ZMap}}
    
    & \xmark                           & &   7,139 & 58.5 & 6,285 & 90.5 &  12,353 & 69.6 & 2,671 & 93.1\\
    \cmidrule(lr){2-11}

    & \multirow{4}{*}{\cmark}
    & $\mathcal{B}_{ZMap}$               &   4,617 & 37.8 &   519 &  7.5 &   4,281 & 24.1 &   198 &  6.9 \\
    & & $\emptyset$                      &     422 &  3.5 &   140 &  2.0 &     972 &  5.5 &     - &    - \\
    & & \parbox{1.4cm}{$\mathcal{B}_{ZMap} \land \emptyset$}
                                         &      22 &  0.2 &     - &    - &     131 &  0.7 &     - &    - \\
    \cmidrule(lr){3-11}
    & & $\sum$                           &   5,061 & 41.5 &   659 &  9.5 &   5,384 & 30.4 &   198 &  6.9 \\
    
    \cmidrule(lr){2-11}
    & $\sum$ &                           &  12,200 &      & 6,944 &      &  17,737 &      & 2,869 &     \\
    
    \midrule
    \multirow{2}{*}{\rotatebox{90}{Other}}
    & \xmark &                           & 197,421 & 100 &   612 &   100 & 391,898 &  100 & 9,915 & 100 \\    
    \cmidrule(lr){2-11}
    & \cmark
    & $\mathcal{B}_{ZMap}$               &       - &   - &     - &     - &        3 &  0.0 &    - &   - \\    
    \bottomrule
  \end{tabular}
  
\end{table}

\Cref{table:results} presents the identification results.
For the telescope dataset, our method is successful on 37.8\% ZMap scans 
using TCP and 7.5\% using UDP.
For the backbone dataset, percentages of successful identification are lower: 
24.1\% (resp. 6.9\%) for TCP (resp. UDP) scans.
Overall, the majority of identified scans use the default ZMap blacklist \cite{zmap_blacklisting}.
Only 3.5\% (resp. 1.8\%) of ZMap TCP (resp. UDP) scans in the telescope data and 5.5\% 
of TCP scans in the backbone data do not use any blacklist.
This means that they actually send packets to IP addresses that are reserved 
such as private or multicast addresses.
They are thus needlessly increasing the workload of their infrastructure and 
their upstream provider.
We do not identify 58.5\% ZMap scans using TCP and 90.5\% 
using UDP in the telescope dataset, and 69.2\% ZMap scans using TCP and 93.1\% 
using UDP in the backbone dataset.
By analyzing reverse-DNS entries, we discover probing entities that confirm to 
us that they use customized blacklist.
In the telescope data, 60\% of UDP scans with ZMap's IPID are from Rapid7 
\cite{rapid7_project_sonar}, and that 11\% of TCP scans with ZMap's IPID are 
from University of Michigan \cite{umich}.
We thus hypothesize that, either unidentified scans use custom blacklists that 
we could not identify, or use several threads (see \Cref{subsec:threads}), or 
packet reordering occurred and our sampling strategy was not sufficient to cope 
with this problem (see \Cref{subsec:sampling}).
A small number of scans (22 in the telescope and 131 in the backbone) have 
been identified by our tool as using both the default ZMap blacklist, and no 
blacklist at the same time.
This is due to the fact that some of the precomputed offsets 
$\offset_{k}^{\mathcal{B}_{ZMap}}$ and $\offset_{k}^{\emptyset}$ (as described 
in \Cref{subsec:offset_computation}) are the same when the scanned network is small.
It is therefore impossible to determine if these scans use any blacklist.
The same generators are, however, returned in both cases, and it is still 
possible to infer additional scan characteristics.

We detect 3 scans \emph{without} the ZMap specific value for IP ID field in the backbone 
dataset.
These scans are performed by 3 IP in the same /24 prefix.
They have many common characteristics: they start and finish around the same 
time, they use the same generator $g$, exhibit 
low visibility (i.e. we do not see all their packets, see 
\Cref{subsec:visibility_progress}), target the whole Internet and the destination 
port is 443.
We thus hypothesize that a single entity uses sharding to decrease detection 
odds while removing ZMap IP ID field fingerprint to increase stealthiness.

\subsubsection{Run time}

\begin{table}[t!]
  \setlength\tabcolsep{3pt} %
  \centering
  \captionof{table}{ZMap identification run time (mean $\pm$ standard deviation in seconds). $\mu$ is the mean. $\sigma$ is the standard deviation. $\widetilde{x}$ is the median.}
  \label{table:execution_time}
  \begin{tabular}{llrrrrrrrrrrrr }
    \toprule

    \multirow{2}{*}{\rotatebox{90}{\textbf{Blacklist}}}
    & \multirow{2}{*}{\rotatebox{90}{\parbox{1.2cm}{\textbf{Result}}}}
    & \multicolumn{4}{c}{\textbf{Telescope}}
    & \multicolumn{4}{c}{\textbf{Backbone}}
    &  \\
    \cmidrule(lr){3-6}
    \cmidrule(lr){7-11}
    & 
    & $\mu$ & $\pm$ & $\sigma$ & $\widetilde{x}$
    & $\mu$ & $\pm$ & $\sigma$ & $\widetilde{x}$ \\
    \\
    \midrule

    \multirow{2}{*}{\rotatebox{90}{$\emptyset$}}
    & \xmark & 333.3 & $\pm$ & 100.9 & 290.6 & 178.5 & $\pm$ & 22.6 & 183.1 \\
    \cmidrule(lr){2-11}
    & \cmark & 104.6 & $\pm$ &  93.7 &  59.2 &  56.0 & $\pm$ & 48.6 &  38.4 \\
    
    \midrule

    \multirow{2}{*}{\rotatebox{90}{\parbox{0.95cm}{$\mathcal{B}_{ZMap}$}}}
    & \xmark & 351.3 & $\pm$ & 101.6 & 300.4 & 177.8 & $\pm$ & 22.6 & 182. \\
    \cmidrule(lr){2-11}
    & \cmark &  70.7 & $\pm$ &  38.1 &  58.5 &  57.1 & $\pm$ & 48.7 &  38.5  \\

    \bottomrule
  \end{tabular}

\end{table}

We instrument the execution time of our tool.
The network telescope dataset is analyzed using a machine with 4 Intel Xeon 
E7-4820 (octo core with Hyper-Threading).
The backbone dataset is analyzed using two machines with 2 Intel Xeon 
E5-2650 (octo core with Hyper-Threading).
In both cases, we restrict the number of paralellized executions to avoid 
Hyper-Threading.
Each scan is analyzed by our method using a single core.
\Cref{table:execution_time} details the measured execution time.
When the identification is successful, duration averages slightly more than a 
minute for the 
telescope dataset and slightly less than a minute for the backbone data.
When our method fails to identify a ZMap scan, the average run time is 
around 5 minutes for the telescope data and 3 minutes for backbone data.

\subsubsection{Impact of destination IP address sequence sampling}
\label{subsec:sampling}

Packet reordering occurs on devices \cite{Bennett1999Packet} or may be caused 
by 
routing events \cite{Mogul1992Observing}.
Network traffic analysis showed diverse percentage of packet reordering: 0.3 
to 2\% \cite{Paxson1997End}, 3.2\% \cite{Wang2004Study}, 1 to 1.5\% 
\cite{Jaiswal2007Measurement} and 0.074\% \cite{Murray2012State}.
These percentages are significant and potentially threaten the result of our 
identification method by inverting the order of consecutive observed packets $o_i$.
In order to alleviate this problem, we actually analyze both the raw sequence 
of the first 20 observed packet $o_i$ and a \emph{sampled} sequence.
This sequence contains 20 packets spread in the raw sequence: the first packet 
is $o_1$, the last one $o_c$ and the 18 other packets are evenly spread between 
$o_1$ and $o_c$.
\Cref{fig:sampling} provide an example of the sampling procedure for a scan 
with 41 packets.
We analyze both the raw sequence $\{o_1 \dots o_{20} \}$ and the sampled 
sequence which contain $o_i$ with $i={1 \dots 41}$ and $i$ an odd positive 
integer.

\begin{figure}[t!]
  \centering
  \includegraphics[width=1.0\columnwidth]{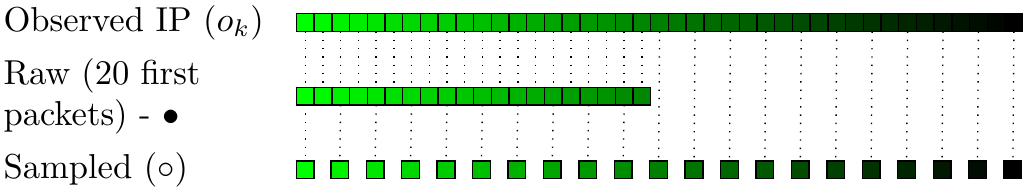}
  \caption{Example of raw ($\bullet$) and sampled ($\circ$) sequences for a scan containing 41 destination IP addresses.}
  \label{fig:sampling}
\end{figure}

\begin{table}[t!]
  \setlength\tabcolsep{4pt}
  \centering
  \captionof{table}{Identification results of scans with the ZMap fingerprint 
    regarding the raw ($\bullet$) and sampled ($\circ$) sequences.
  }
  \label{table:sampling}
  \begin{tabular}{lcrrrrrrrrrrrr }
    \toprule
    \multirow{3}{*}{\rotatebox{90}{\parbox{1.4cm}{\textbf{Blacklist}}}}
    & \multirow{3}{*}{\parbox{1.0cm}{\textbf{Data type}}}
    & \multicolumn{4}{c}{\textbf{Telescope}}
    & \multicolumn{4}{c}{\textbf{Backbone}} \\
    
    \cmidrule(lr){3-6}
    \cmidrule(lr){7-10}

    & 
    & \multicolumn{2}{c}{\textbf{TCP}} & \multicolumn{2}{c}{\textbf{UDP}}
    & \multicolumn{2}{c}{\textbf{TCP}} & \multicolumn{2}{c}{\textbf{UDP}} \\
    
    \cmidrule(lr){3-4}
    \cmidrule(lr){5-6}
    \cmidrule(lr){7-8}
    \cmidrule(lr){9-10}
    & 
    & \# & \% & \# & \% 
    & \# & \% & \# & \% \\ 
    \midrule
    
    \multirow{4}{*}{\rotatebox[origin=c]{90}{$\mathcal{B}_{ZMap}$}}
    & $\bullet \land \circ$
    & 4,513 & 97.7 & 444 & 85.5 & 3,975 & 92.9 &  16 &  8.1 \\
    & $\bullet \land \lnot \circ$ 
    &    64 &  1.4 &  60 & 11.6 &     8 &  0.2 &   1 &  0.5 \\
    & $\lnot \bullet \land \circ$ 
    &    40 &  0.9 &  15 &  2.9 &   298 &  7.0 & 181 & 91.4 \\
    \cmidrule(lr){2-10}
    & $\sum$              & 4,617 &      & 519 &      & 4,281 &      & 198 \\
    
    \midrule

    \multirow{4}{*}{\rotatebox{90}{$\emptyset$}}
    & $\bullet \land \circ$
    & 350 & 82.9 & 105 & 75.0 &   954 & 98.1 & - &   - \\
    & $\bullet \land \lnot \circ$ 
    &  36 &  8.5 &  20 & 14.3 &     2 &  0.2 & - &   - \\
    & $\lnot \bullet \land \circ$ 
    &  36 &  8.5 &  15 & 10.7 &    16 &  1.6 & - &   - \\
    \cmidrule(lr){2-10}
    & $\sum$              & 444 &      & 140 &      &   972 &      & - \\
    \midrule

    \multirow{4}{*}{\rotatebox{90}{\parbox{1.5cm}{$\mathcal{B}_{ZMap}$-$\emptyset$}}}
    
    & $\bullet \land \circ$
    &  1 &  4.5 & - &   - &   4 &  3.1 & - &   - \\
    & $\bullet \land \lnot \circ$
    &  4 & 18.2 & - &   - &  21 & 16.0 & - &   - \\
    & $\lnot \bullet \land \circ$
    & 17 & 77.3 & - &   - & 106 & 80.9 & - &   - \\
    \cmidrule(lr){2-10}
    & $\sum$              & 22 &      & - &     & 131 &      & - & \\
    
    \midrule
    
    \multirow{4}{*}{*}
    & $\bullet \land \circ$
    & 4,864 & 96.1 & 549 & 83.3 & 4,933 & 91.6 &  16 & 8.1 \\
    & $\bullet \land \lnot \circ$ 
    &   104 &  2.1 &  80 & 12.1 &    31 &  0.6 &   1 & 0.5 \\
    & $\lnot \bullet \land \circ$
    &    93 &  1.8 &  30 &  4.6 &   420 &  7.8 & 181 & 91.4 \\
    \cmidrule(lr){2-10}
    & $\sum$              & 5,061 &      & 659 &      & 5,384 &      & 198 &      \\
    
    \bottomrule
  \end{tabular}
  
\end{table}

\begin{table}[t!]
  \setlength\tabcolsep{3pt} %
  \centering
  \captionof{table}{Targeted prefixes of scans with the ZMap fingerprint with full percentages calculations.}
  \label{table:targeted_prefixes}
  \begin{tabular}{lrrrrrrrrrrr}
    \toprule
    \multirow{3}{*}{\rotatebox{90}{\parbox{1.4cm}{\textbf{Blacklist}}}}
    & \multirow{3}{*}{\parbox{1.0cm}{\textbf{Prefix size}}}
    & \multicolumn{4}{c}{\textbf{Telescope}}
    & \multicolumn{4}{c}{\textbf{Backbone}} \\
    \cmidrule(lr){3-6}
    \cmidrule(lr){7-10}
    
    & 
    & \multicolumn{2}{c}{\textbf{TCP}} & \multicolumn{2}{c}{\textbf{UDP}}
    & \multicolumn{2}{c}{\textbf{TCP}} & \multicolumn{2}{c}{\textbf{UDP}}
    \\
    
    \cmidrule(lr){3-4}
    \cmidrule(lr){5-6}
    \cmidrule(lr){7-8}
    \cmidrule(lr){9-10}
    & 
    & \# & \% & \# & \% 
    & \# & \% & \# & \% \\ 
    
    \midrule
    
    \multirow{3}{*}{\rotatebox{90}{\parbox{0.7cm}{$\mathcal{B}_{ZMap}$}}}
    
    & 0  & 4,615 & 91.2 & 518 & 78.6 & 4,284 & 79.6 & 198 & 100.0 \\
    & 1  &     2 &  0.0 &   1 &  0.2 &     - &    - &   - &     - \\
    
    \midrule
    
    \multirow{3}{*}{\rotatebox{90}{$\emptyset$}}
    
    & 0  &   414 & 8.2 & 140 & 21.2 &    972 & 18.1 &   - &    - \\
    & 1  &     5 & 0.1 &   - &    - &      - &    - &   - &    - \\
    & 3  &     3 & 0.1 &   - &    - &      - &    - &   - &    - \\
    
    \midrule
    
    \multirow{3}{*}{\rotatebox{90}{\parbox{1.cm}{$\mathcal{B}_{ZMap}$ \& $\emptyset$}}}
    
    &  5 &     1 & 0.0 &   - &    - &     - &    - &   - &    - \\
    &  8 &     4 & 0.1 &   - &    - &   131 &  2.4 &   - &    - \\
    & 16 &    17 & 0.3 &   - &    - &     - &    - &   - &    - \\
    
    \midrule
    
    \multirow{6}{*}{*}
    
    &  0 & 5,029 & 99.4 & 658 & 99.8 & 5,256 & 97.6 & 198 & 100.0\\
    &  1 &     7 &  0.1 &   1 &  0.2 &     - &    - &   - &     -\\
    &  3 &     3 &  0.1 &   - &    - &     - &    - &   - &     -\\
    &  5 &     1 &  0.0 &   - &    - &     - &    - &   - &     -\\
    &  8 &     4 &  0.1 &   - &    - &   131 &  2.4 &   - &     -\\
    & 16 &    17 &  0.3 &   - &    - &     - &    - &   - &     -\\
        
    \bottomrule
  \end{tabular}
\end{table}

\Cref{table:sampling} presents the breakdown of our results regarding sampling.
In the telescope data, scans that were only successfully identified with sampling 
represent a minority of all scans: 1.8\% for TCP and 4.6\% for UDP.
In backbone data, ``sampling only'' exhibits a bigger percentage of successfully 
identified scans: 7.8\% of TCP scans and 91.4\% of UDP ones.
We hypothesize that the bigger monitored prefix (65,535 IP addresses in the 
backbone data and 768 IP addresses in the telescope data) increases the 
odds of observing reordered packets.
Another possible explanation is that usual probing speed increased from 2015 
to 2017, and thus caused a packet reordering rise.
We also observe that some scan are identified with raw sequences but not with 
sampled sequences.
We hypothesize that this is due to time-splitting errors (see 
\Cref{sec:pre_processing}).
When this error happens, one or several scans using distinct generators are 
merged together into a single one.
This causes the sampled sequence to contain observed packets from one or more 
scans with distinct generators.
In that case, our method is not successful.

\subsubsection{Targeted prefix}
\label{sec:prefix}

\Cref{table:targeted_prefixes} details the prefix targeted by the identified 
ZMap scans.
More than 99\% of scans found in the telescope data, and more than 97\% of 
scans identified in the backbone data, target a /0 prefix.
This is expected since ZMap is one of the two main high speed scanning tools 
(with Masscan \cite{Masscan}) that aim to perform fast and large scale 
probing.
We however note that some scans target smaller prefixes (e.g. /16).

\begin{figure}[t!]
  \centering
  \includegraphics[width=1.0\columnwidth]{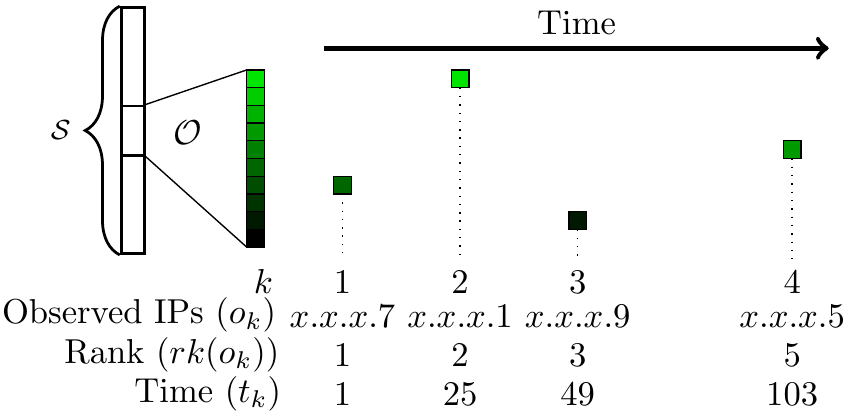}
  \caption{
    Example of scanned network $\mathcal{S}$ and observed network $\mathcal{O}$ 
    with missing packet for $rk = 4$ due to packet loss.
  }
  \label{fig:notations}
\end{figure}

\begin{figure*}[t!]
  \centering
  \subfloat[Telescope]{
    \centering
    \includegraphics[width=0.5\textwidth]{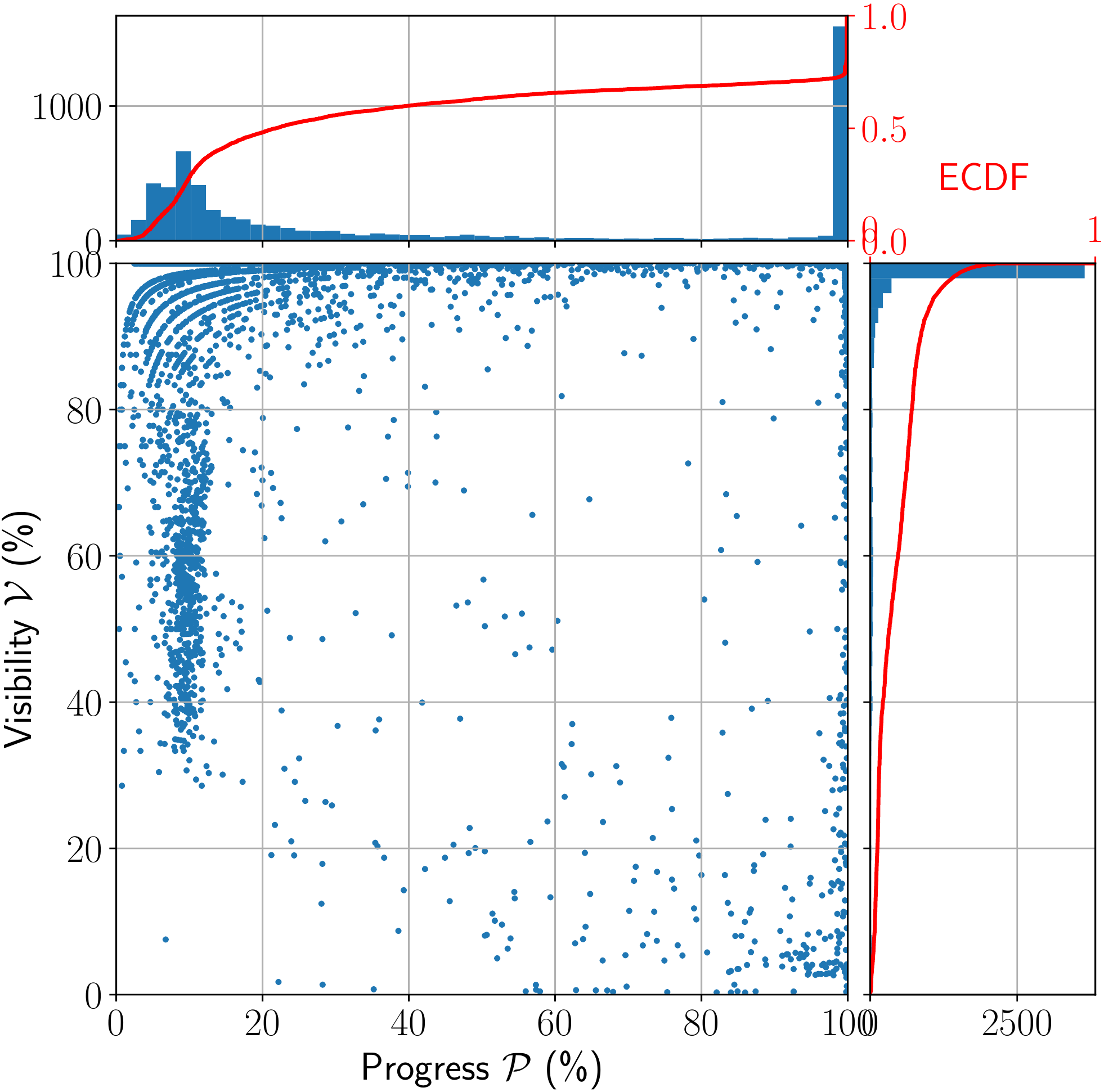}
    \label{fig:visibility_progress_telescope}
  }
  \subfloat[Backbone]{
    \centering
    \includegraphics[width=0.5\textwidth]{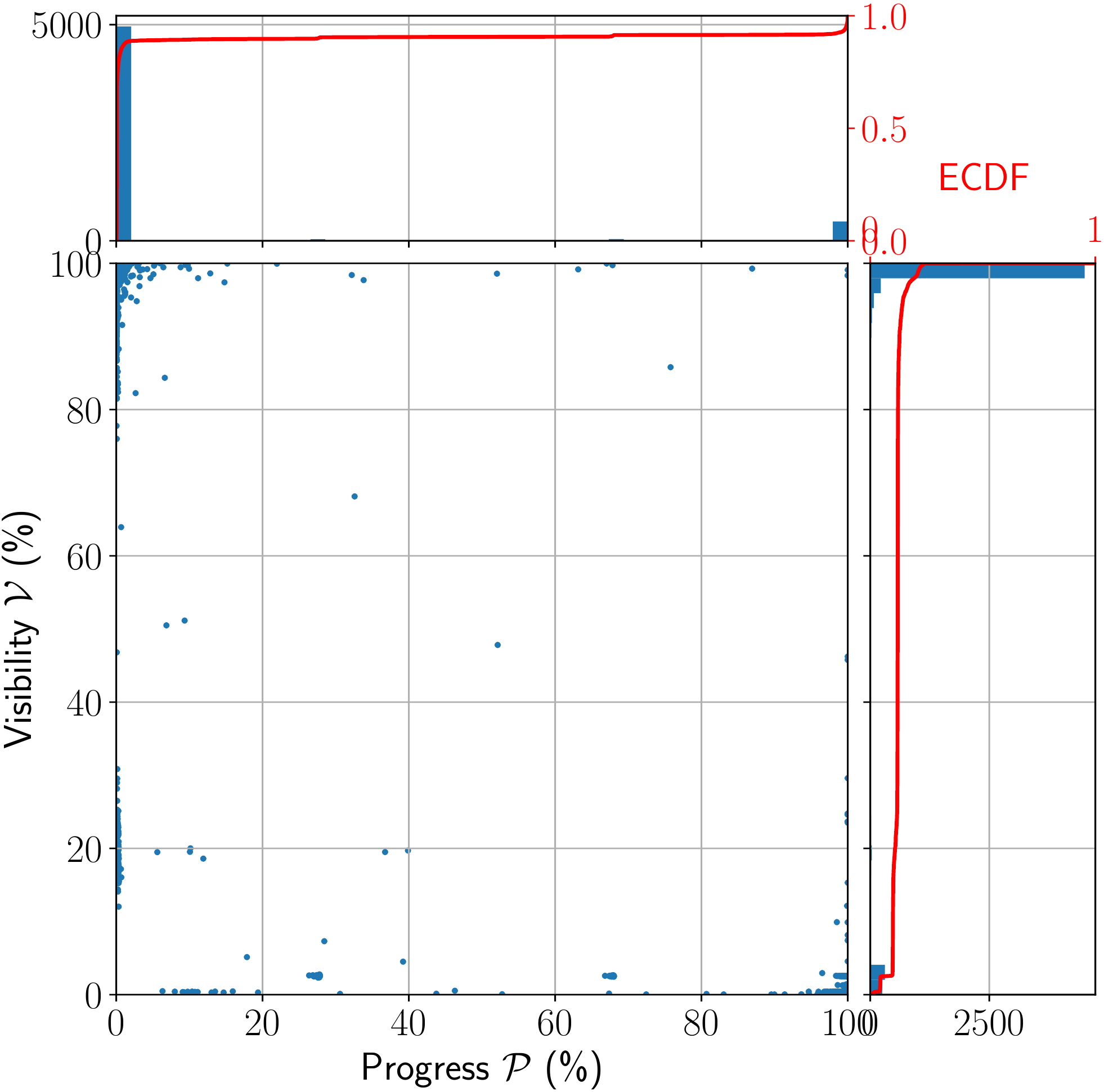}
    \label{fig:visibility_progress_backbone}
  }
  \caption{Scatter plot, histograms and ECDF of visibility $\mathcal{V}$ (percentage of observed probing packets) vs progress $\mathcal{P}$ of scans with the ZMap fingerprint.}
  \label{fig:visibility_progress}
\end{figure*}

\subsubsection{Scan packet visibility and scan progress}
\label{subsec:visibility_progress}

Let $o_1,...,o_m$ the observed probed addresses, and $i_{1},...,i_{m}$ the 
indices of the associated ZMap states, i.e. $o_j = hton(s_{i_j} + offset)$ for 
all $j\in \{1,...,m\}$.
$Det2$ using offset computation is able to identify the generator 
and the prefix 
of the scan.
Then, it is possible to retrieve the internal state and the state index of any 
address $ip$ in $\mathcal{O}$ relatively to the first observed address $o_1$.
Hence, we can reorder the visible IP addresses of the observation network 
$\mathcal{O}$ accordingly to their state indexes, or in other words in their 
scan order.
If $ip \in \mathcal{O}$, we define $rk(ip)$ the rank of the address for this 
order starting from $o_1$ with $rk(o_1)=1$ (see \Cref{fig:notations}). 
Formally, for each $ip \in \mathcal{O}$ let $\delta(ip) = i_{ip} - i_1$ where 
$i_{ip}$ is the index of the internal state that corresponds to $ip$ (i.e. 
$ip = f(s_{i_{ip}})$).
Then, $rk(ip) = |\{ip' \in \mathcal{O} \mid \delta(ip') < \delta(ip) \} |$.

We then compute two metrics.
First, we note $\mathcal{P}$, the progress of the scan.
It is the relative position of the index of the last 
observed probed address within the observed IP addresses: 
$\mathcal{P} = \frac{rk(o_m)}{l}$ where $l$ is the size of $\mathcal{O}$.
This value reflects the scan completion percentage.
A $\mathcal{P}$ value lower than one, means the scan did not reach all IP 
addresses in the observed prefix $\mathcal{O}$.
This may be caused by sharding \cite{Adrian2014Zippier}, ZMap 
capping options (-n for probe number, -N for result number, -t for time
\cite{zmap_rate_limiting_and_sampling}), early interruption by user or 
incomplete data.
Second, we note $\mathcal{V}$ the visibility of the scan.
It is the percentage of observed probing packet: $\mathcal{V} = \frac{m}{rk(o_m)}$.
When this value is lower than one, it means that some packets are missing, 
either due to packet loss or sharding \cite{Adrian2014Zippier}.
On \Cref{fig:notations}, $\mathcal{P} = \frac{5}{10} = 50\%$ and 
$\mathcal{V} = \frac{4}{5} = 80\%$.

\Cref{fig:visibility_progress} present our results.
For the telescope (resp. backbone) dataset, 72\% (resp 85\%) of identified 
scans have a $\mathcal{V}$ greater than 95\%.
Scans with visibility smaller than 95\% exhibit either progress smaller than 
20\% (67\% in the telescope dataset and 30\% in backbone data), or progress 
higher than 90\% (16\% in the telescope dataset and 57\% in backbone data).
We hypothesize that scans with low visibility exhibit low progress 
because they were stopped as soon as users noticed packet loss or upstream 
overload.
Scans with visibility higher than 95\% exhibit a behavior similar to those 
with low visibility regarding progress $\mathcal{P}$.
These scans' progress values are either smaller than 20\% (40.8\% in the 
telescope dataset and 99.7\% in backbone data), or higher than 90\% (34.8\% in 
the telescope dataset and 0.04\% in backbone data).
We notice some peaks for progress value of 5 and 10\% in 
\Cref{fig:visibility_progress_telescope}.
We hypothesize that these values are common for users that do not want to scan 
the full IPv4 address space and use the -n option in ZMap.
Sharding causes visibility values lower than 50\%.
It is visible on the lower right part of \Cref{fig:visibility_progress_telescope}
and on the lower part of \Cref{fig:visibility_progress_backbone}.
We observe many scans with visibility values around 0.4 and 2.5\% in
\Cref{fig:visibility_progress_backbone}.
We hypothesize that these scans uses sharding among 250 and 40 distinct sources.

\begin{table}[t!]
  \setlength\tabcolsep{7pt} %
  \centering
  \captionof{table}{Breakdown of the numbers of IP address that we that identified that used specific generators: 3 or 12.}
  \label{table:generator_specific}
  \begin{tabular}{llrrrrrrrr }
    \toprule
    
    \textbf{Blacklist}
    & \textbf{$g$ value}
    & \multicolumn{2}{c}{\textbf{Telescope}}
    & \multicolumn{2}{c}{\textbf{Backbone}} \\
    
    \cmidrule(lr){3-4}
    \cmidrule(lr){5-6}
    &                    & TCP & UDP & TCP & UDP \\
    
    \midrule
    
    \multirow{2}{*}{$\mathcal{B}_{ZMap}$}
    
    &   3                &   9 & 4 & 3 & - \\
    &  12                &  11 & 4 & 2 & - \\
    
    \midrule
    
    \multirow{2}{*}{$\emptyset$}
    
    &   3                &  9 & 1 & 4 & - \\
    &  12                &  6 & 2 & - & - \\
    
    \bottomrule
  \end{tabular}

\end{table}

\begin{table}[t!]
  \setlength\tabcolsep{4.5pt}
  \centering
  \captionof{table}{Breakdown of generators used by more than one IP.
    The first line means that we identify one group group of 2 IP located in 
    the same /24 with the same $g$ performing TCP scans in the network 
    telescope and 6 group of 2 to 28 IPs .
  }
  \label{table:generator_random}
  \begin{tabular}{llrrrrrrrrrr}
    \toprule

    \multirow{4}{*}{\rotatebox{90}{\parbox{1.cm}{\textbf{Blacklist}}}}
    & \multirow{2}{*}{\parbox{1.2cm}{\textbf{Src IP location}}}
    & \multicolumn{4}{c}{\textbf{Telescope}}
    & \multicolumn{4}{c}{\textbf{Backbone}} \\
    
    \cmidrule(lr){3-6}
    \cmidrule(lr){7-10}
    
    &                    
    & \multicolumn{2}{c}{\textbf{TCP}} 
    & \multicolumn{2}{c}{\textbf{UDP}}
    & \multicolumn{2}{c}{\textbf{TCP}} 
    & \multicolumn{2}{c}{\textbf{UDP}} \\
    
    \cmidrule(lr){3-4}
    \cmidrule(lr){5-6}
    \cmidrule(lr){7-8}
    \cmidrule(lr){9-10}
    
    &
    & \textbf{\#$g$}
    & \textbf{\#IP}
    & \textbf{\#$g$}
    & \textbf{\#IP}
    & \textbf{\#$g$}
    & \textbf{\#IP}
    & \textbf{\#$g$}
    & \textbf{\#IP}
    \\
    
    \midrule
    
    \multirow{7}{*}{\rotatebox{90}{$\mathcal{B}_{ZMap}$}}
    
    & \multirow{3}{*}{Same /24} 
    &  1 & 2 & - &   & 6 & 2-28 & - &   \\
    &                    &  - &   & - &   & 1 &    3 & - &   \\
    &                    &  - &   & - &   & 1 &    4 & - &   \\
    
    \cmidrule(lr){2-10}
    & Several /24        &  - &   & - &   & 8 &   39 &   &   \\
    \cmidrule(lr){2-10}
    & Distinct /24       &  - &   & - &   & - &      & 1 & 2 \\
    \cmidrule(lr){2-10}
    & Mixed              &  1 & 3 & - &   & - &      & - &   \\

    \midrule
    
    \multirow{3}{*}{\rotatebox{90}{$\emptyset$}}
    
    & Same /24           & 17 & 2 & - &   & 6 & 2-28 & - &  \\
    \cmidrule(lr){2-10}
    
    & \multirow{2}{*}{Distinct /24}       
    &  2 & 2 & 3 & 2 & - &      & - &  \\
    &                    &  1 & 3 & - &   & - &      & - &  \\
    \bottomrule
  \end{tabular}

\end{table}

\subsubsection{Generator reuse}

We analyze the generator $g$ (see \Cref{sec:internal_state}) values identified 
by our method.
Generator can be set using the seed option (-e) for repeatability or 
reproducibility purposes, or, to perform sharding (see 
\Cref{sec:internal_state}) from several machines or network interfaces.
We first notice that specific generators (3 and 12) are dominant across 
scans.
\Cref{table:generator_specific} details these occurrences.
We could not find an explanation to the unusual occurrence of these peculiar 
values.
\Cref{table:generator_random} presents generator reuse across IP addresses inside both 
datasets for $g$ other than 3 and 12.
We observe several instance of generator reuse across a number of IP ranging 
from 2 to 39.
This is consistent with the sharding observed on \Cref{fig:visibility_progress}.
Furthermore, we notice that generator reuse mainly occur across IP located 
in the same /24.
We hypothesize that scanning entities usually use several colocated machines 
or VMs from a single provider.
We however notice that only 1.4\% (resp 34\%) of group of source IP in the 
telescope (resp. backbone) dataset sharing generator values have a visibility 
$\mathcal{V}$ value smaller than 50\%.
We thus hypothesize that the majority of generator reuse is linked to setting 
of a constant seed (-e option) for reproducibility purpose, and not sharding.

\begin{figure*}[t!]
  \centering
  Telescope \hspace{6cm} Backbone \\
  \subfloat[TCP]{
    \centering
    \includegraphics[width=0.5\textwidth]{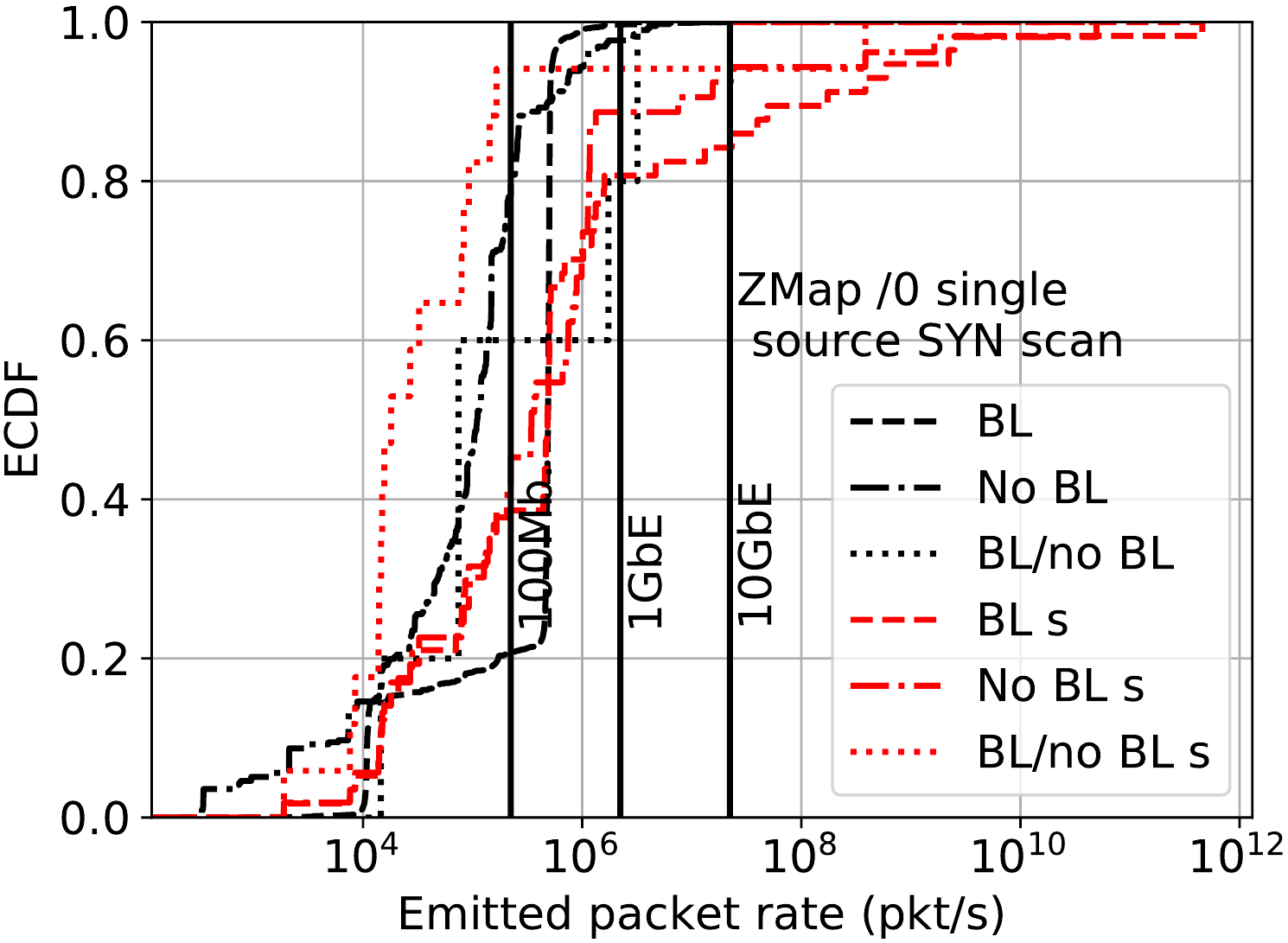}
    \label{fig:packet_rate_ecdf_tcp_sample_telescope}
  }
  \subfloat[TCP]{
    \centering
    \includegraphics[width=0.5\textwidth]{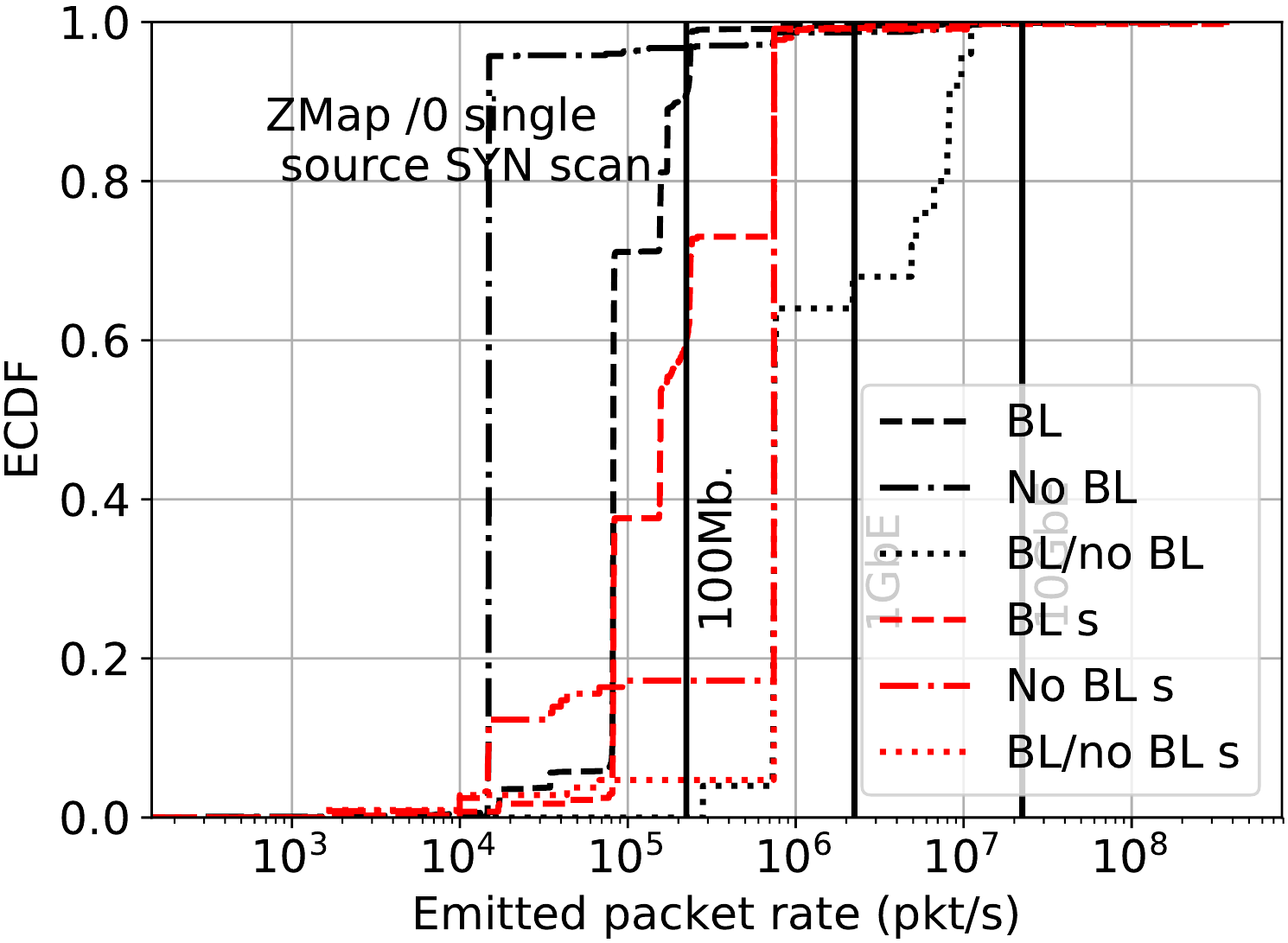}
    \label{fig:packet_rate_ecdf_tcp_sample_backbone}
  }
  \\
  \subfloat[UDP]{
    \centering
    \includegraphics[width=0.5\textwidth]{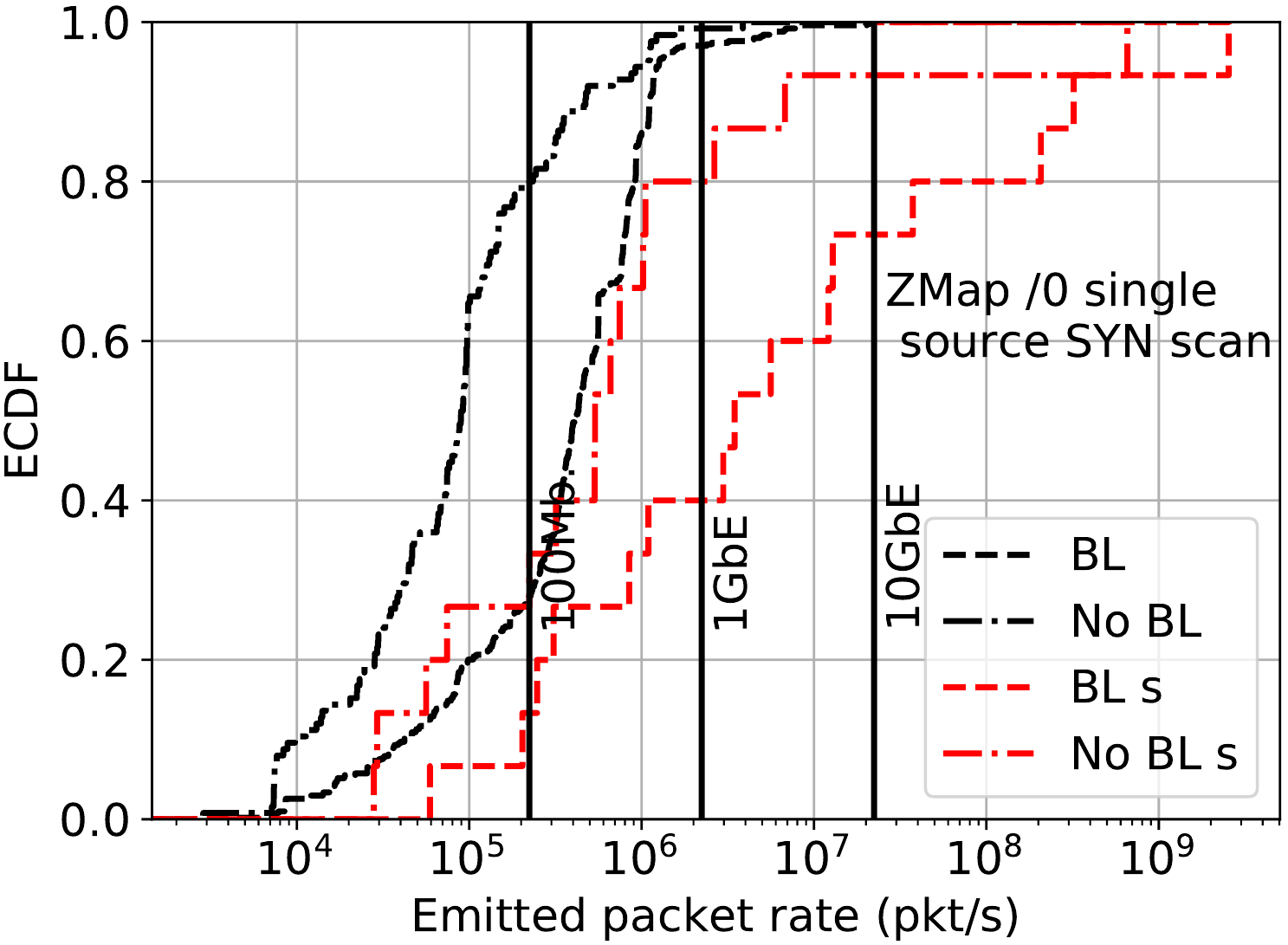}
    \label{fig:packet_rate_ecdf_udp_sample_telescope}
  }
  \subfloat[UDP]{
    \centering
    \includegraphics[width=0.5\textwidth]{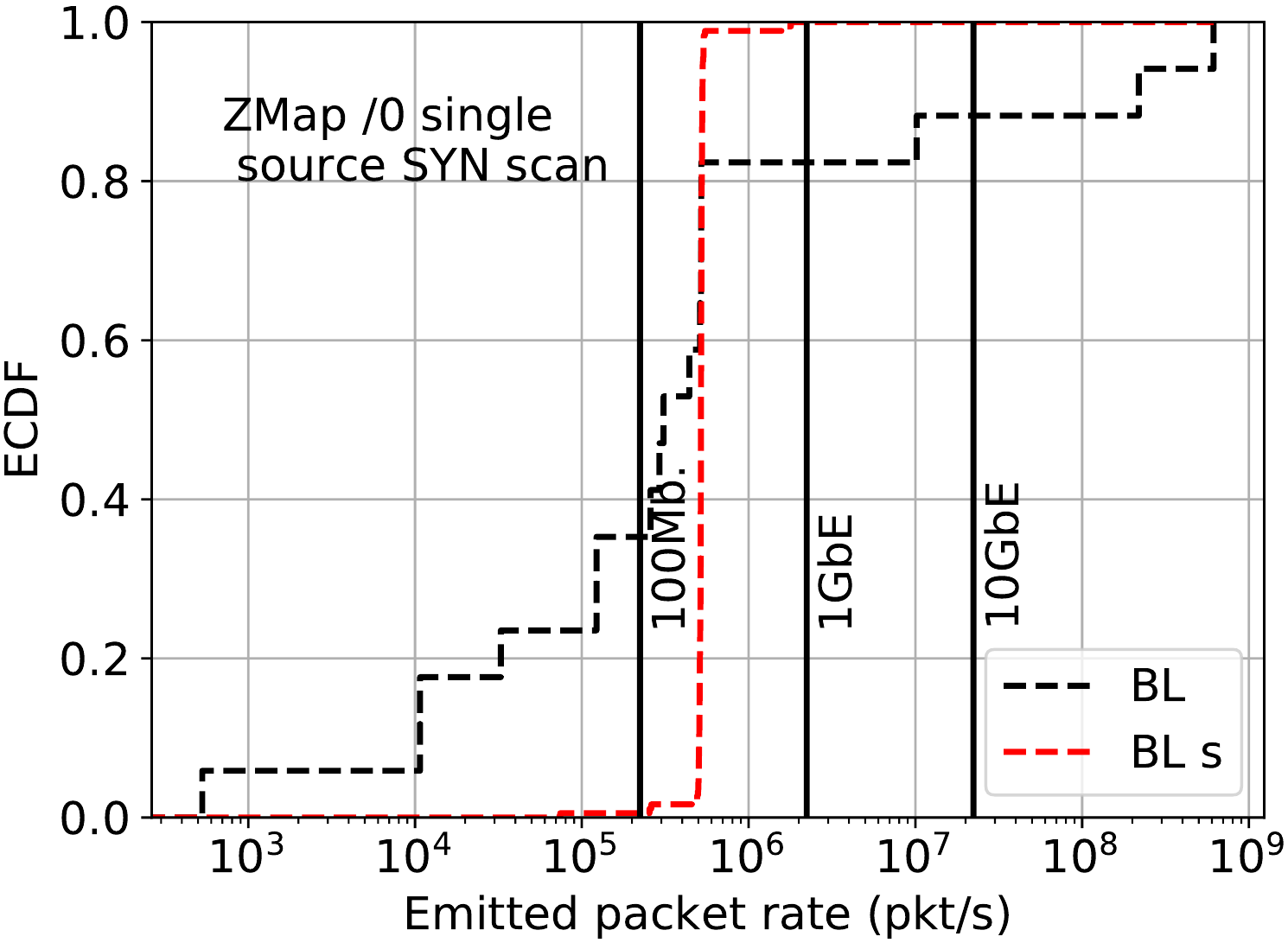}
    \label{fig:packet_rate_ecdf_udp_sample_backbone}
  }
  
  \caption{Emitted packet rate of scans with the ZMap fingerprint and breakdown regarding the use of sampling. 
    Black (resp. red) curves represents scans only identified with or without (resp. only with) the sampled sequence.
    BL: $\mathcal{B}_{ZMap}$ and no BL: $\emptyset$.
  }
  \label{fig:emitted_packet_rate_ecdf_sampling}
\end{figure*}

\subsubsection{Emitted packet rate}
\label{sec:emitted_packet_rate}

The number of packets emitted between the first probe $o_1$ and the last one 
$o_m$ can be retrieved by computing $\delta rk = rk(o_m) - rk(o_1)$.
The elapsed time between the first observed packet and the last is 
$\delta t = t_m - t_1$ (see \Cref{fig:notations}).
We define the emitted packet rate $epr$ as: 
$epr = \frac{\delta rk}{\delta t} = \frac{rk(o_m) - rk(o_1)}{t_m - t_1}$.

\Cref{fig:emitted_packet_rate_ecdf_sampling} depicts emitted packet rate of 
scans.
In order to provide reference, we build $epr$ values for a SYN scan that uses 
100Mb, 1GbE and 10GbE.
We chose the SYN scan because it is the most common type of ZMap scan observed 
\cite{Durumeric2014Security}.
Red curves show scans that were only identified using sampled sequence while 
black ones represents scans identified with or without sampled sequence.
Scans only identified using sampled sequence exhibit a higher emitted packet 
rate than other scans.
We observe that scans identified without any blacklist exhibit a lower packet 
rate than scans detected with the default ZMap blacklist.
Emitted packet rates are usually lower than 100Mb for TCP scans in the backbone 
dataset.
Other scans (TCP and UDP for the telescope dataset and UDP for the backbone 
dataset) are mostly smaller than 1GbE.
1.1\% of scans in the telescope dataset and 0.4\% in the backbone dataset
exhibit $epr$ consistent with a throughput greater than 1GbE.
Their visibility $\mathcal{V}$ (see \Cref{subsec:visibility_progress}) values 
are very small: for the telescope dataset, the median value is 3\% 
and the mean is 4.5\%, for the backbone dataset, these values are 0.04\% and 1.9\%.
We hypothesize that high speed packet rate saturates network equipment 
in the upstream provider of the source IP, yielding packet losses.

\section{Discussions}

Overall, Det2 identification performance in a controlled setup are perfect 
(see \Cref{subsubsec:size_observed_prefix_vs_runtime}).
We however do not identify 58.5\% ZMap scans using TCP and 90.5\% using UDP
in the telescope dataset, and 69.2\% ZMap scans using TCP and 93.1\% using UDP 
in the backbone dataset.
By analyzing reverse DNS records, we find that a significant proportion of
scans with ZMap's IPID originate from entities that use custom blacklists (see \Cref{subsec:general}).
Other possble explanations are the use of several threads (see \Cref{subsec:threads}),
or acute packet reordering (see \Cref{subsec:sampling}).
While threading and packet reordering are linked to ZMap usage and network 
conditions, and thus beyond our reach, customized blacklist can be accounted 
for using offset bruteforce \Cref{subsec:offset_brute_forcing}.
We thus intend to accelerate offset bruteforcing using GPUs.

We identify two biases for our characterization \Cref{subsec:real_world_traffic}.
First, backbone data capture duration is 48 hours.
Computed progress $\mathcal{P}$ (see \Cref{subsec:visibility_progress})
values of scans which were not completely captured may thus be lower than 
their real value.
We however could not access bigger backbone dataset to increase the 
reliability of our data.
Second, we identified and characterized 28.5\% of all scans in our datasets.
Our characterization results may thus be biased towards scans that were 
identified by the $Det2$ method.
We want to improve our identification methods to reduce this bias.

We show in \Cref{subsec:observation_network} that, in the context of the 
exclusion requests documented by Durumeric et al. 
\cite{Durumeric2014Security}, if $\mathcal{O}$ is a /16 subnetwork, then 
probability that $\mathcal{O}$ fulfills the hypothesis regarding blacklisted 
addresses is 77\%.
This probability increases to 98\% if $\mathcal{O}$ is a /20 subnetwork.
The size difference between the backbone dataset and the telescope one may thus 
explain why $Det2$ is less successful on the backbone dataset.
The new sharding mechanism \cite{zmap_sharding_pizza} introduced in September 
2017 impacts some of our metrics.
The progress value $\mathcal{P}$ of a completely observed scan will be 
$\frac{1}{d}$, with $d$ the number of shards.
Our results are not affected by this change because our data was collected 
before it occurred.
Future experiments shall however be careful when using feature like $\mathcal{P}$.

Ruth et al. \cite{Ruth2019Hidden} use ICMP answers to ZMap scans to gather 
information on the Internet's control plane.
We envision that our identification method is able to detect packet 
loss or packet reordering on Internet.
By using several ZMap scan observation points, and by combining this input with 
routing information, it may be possible to locate and diagnose performance 
problems or failures.

\section{Conclusions}

In this work, we propose two cryptogranalysis methods to identify ZMap 
scanning.
We then apply the $Det2$ method with pre-computed offsets to synthetic and real-world traffic, and 
evaluate its efficiency and computing cost.
Finally, we provide an in-depth analysis of ZMap usage in the wild regarding
impact of packet reordering, targeted prefix, generator reuse, probing progress 
and visibility, sharding and probing speed.
We thus identify several misuses, such as private IP address probing and 
packet rate above upstream capacity, that waste network resources.

\subsection*{Acknowledgments}
We would like to thank the WIDE project for granting us access to the MAWI
traces, and Kensuke Fukuda for providing the machines that were used to process 
the network traffic.

\bibliographystyle{plain}
\bibliography{references}

\section*{Supplementary results}
\label{sec:supplementary_material}

\begin{figure*}[h!]
  \centering
  Telescope \hspace{6cm} Backbone \\
  
  \subfloat[ZMap FP - TCP]{
    \centering
    \includegraphics[width=0.5\textwidth]{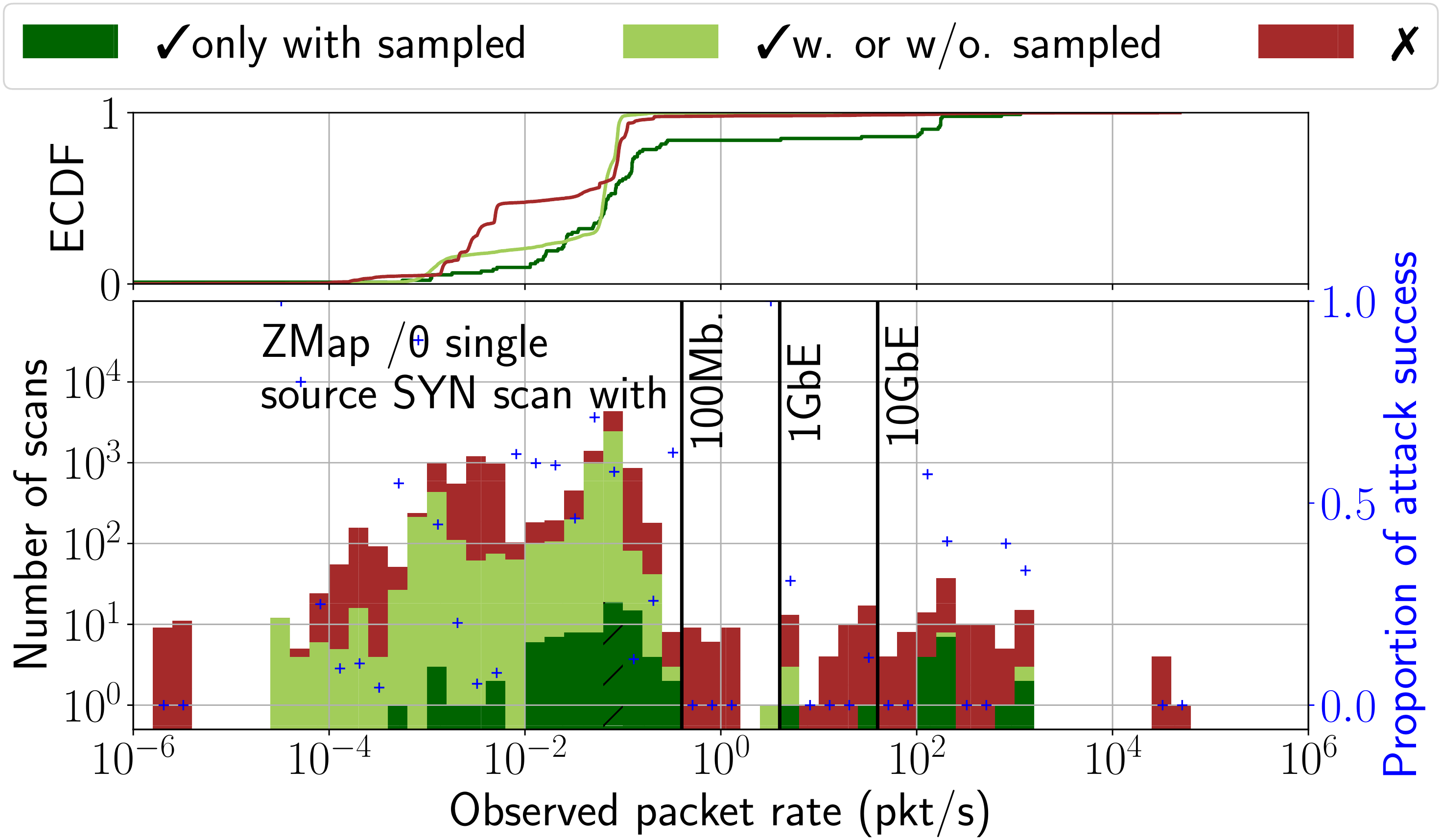}
    \label{fig:observed_packet_rate_zmap_tcp_sampled_dn}
  }
  \subfloat[ZMap FP - UDP]{
    \centering
    \includegraphics[width=0.5\textwidth]{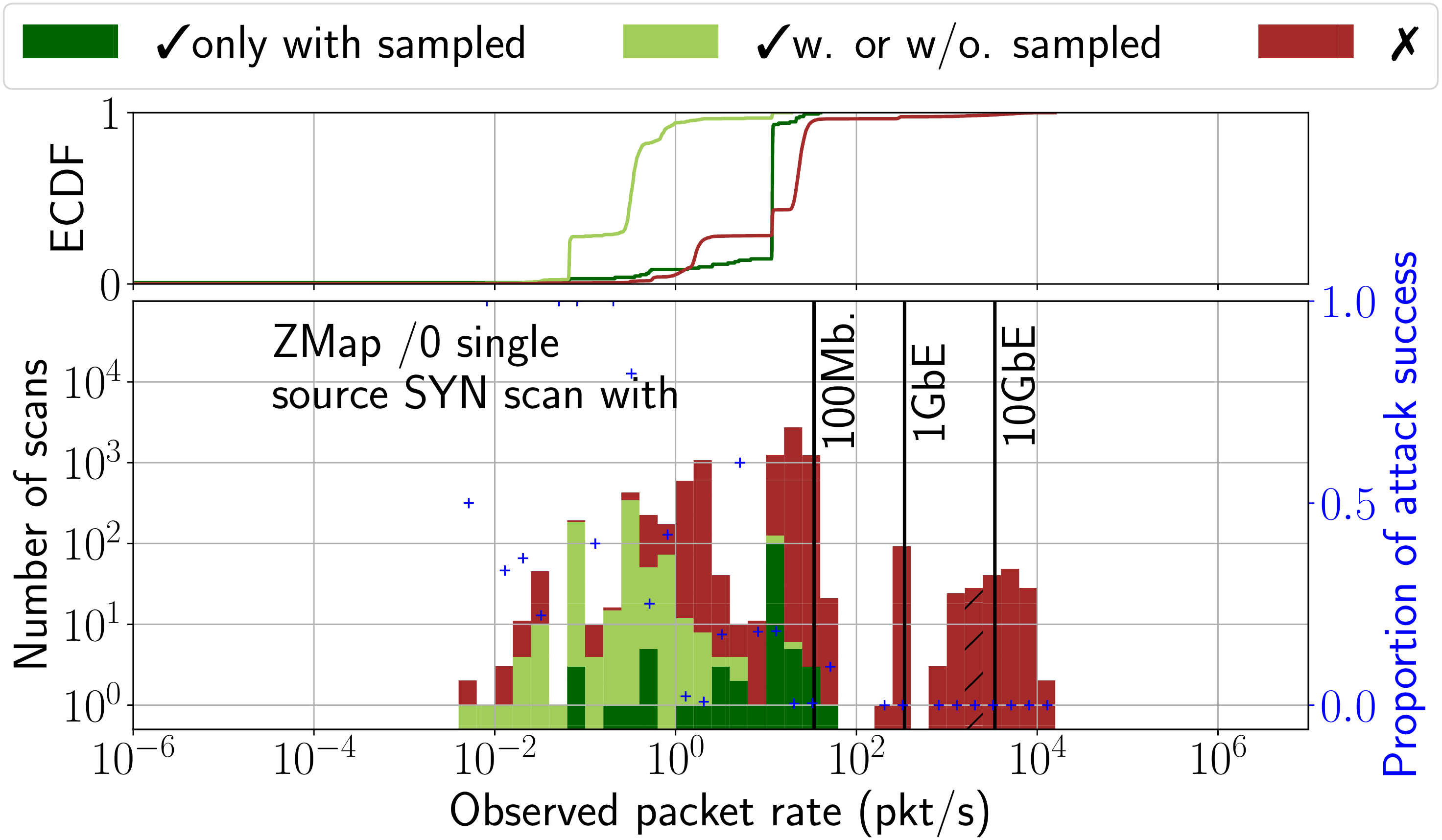}
    \label{fig:observed_packet_rate_zmap_tcp_sampled_mawi}
  }
  \\
  \subfloat[ZMap FP - UDP]{
    \centering
    \includegraphics[width=0.5\textwidth]{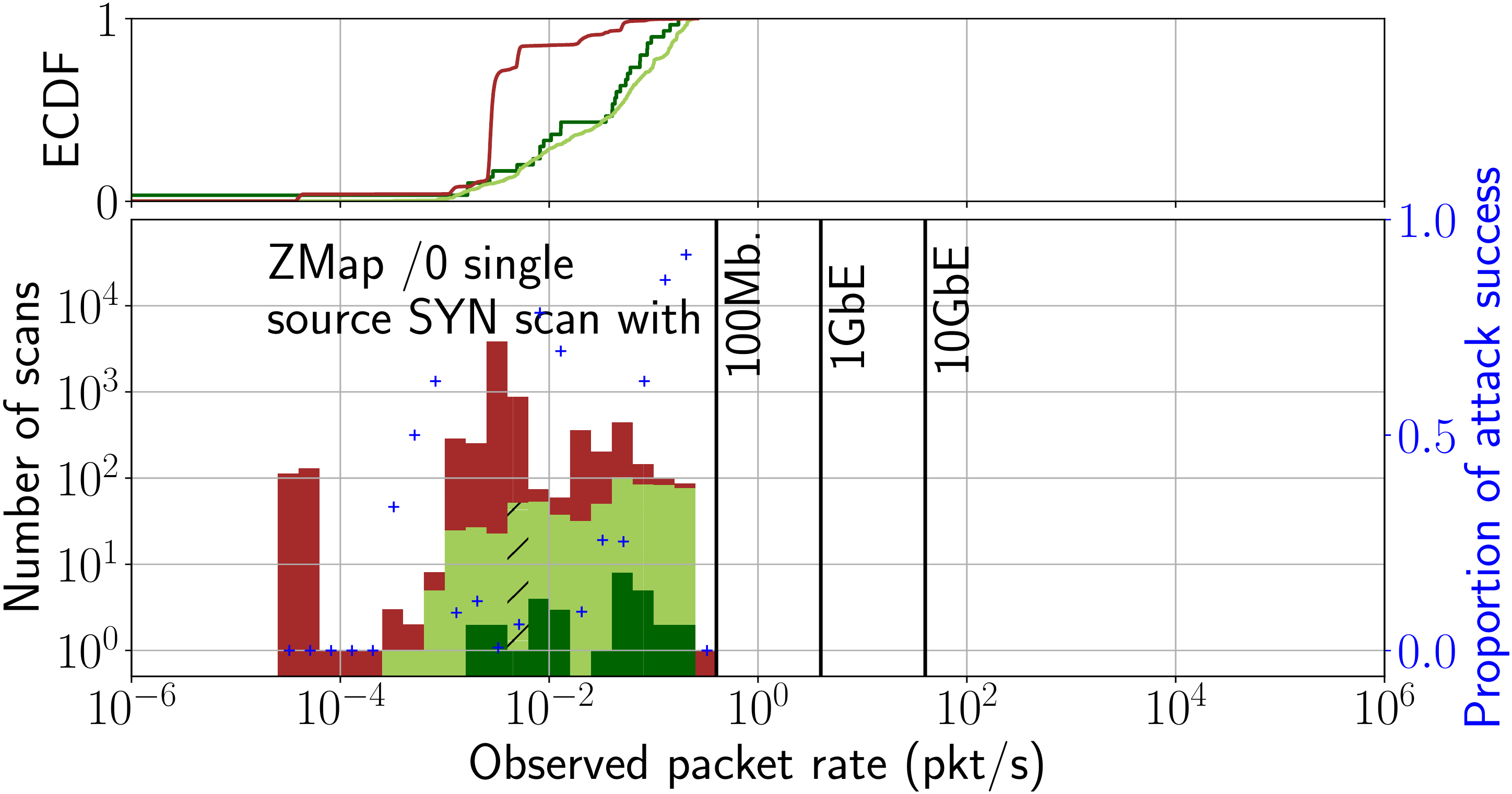}
    \label{fig:observed_packet_rate_zmap_udp_sampled_dn}
  }
  \subfloat[ZMap FP - UDP]{
    \centering
    \includegraphics[width=0.5\textwidth]{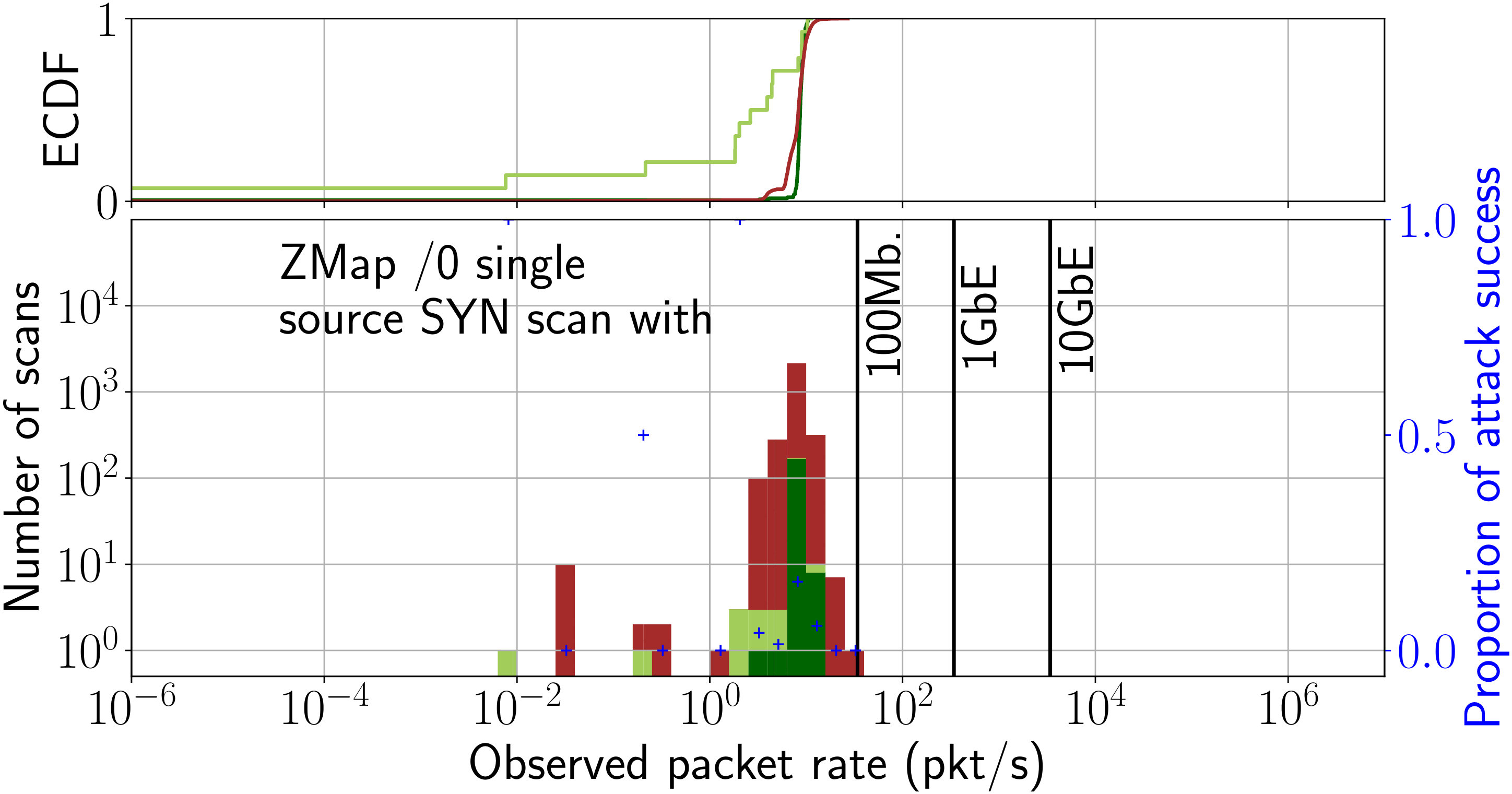}
    \label{fig:observed_packet_rate_zmap_udp_sampled_mawi}
  }
  \\
  \subfloat[Non-ZMap FP - TCP]{
    \centering
    \includegraphics[width=0.5\textwidth]{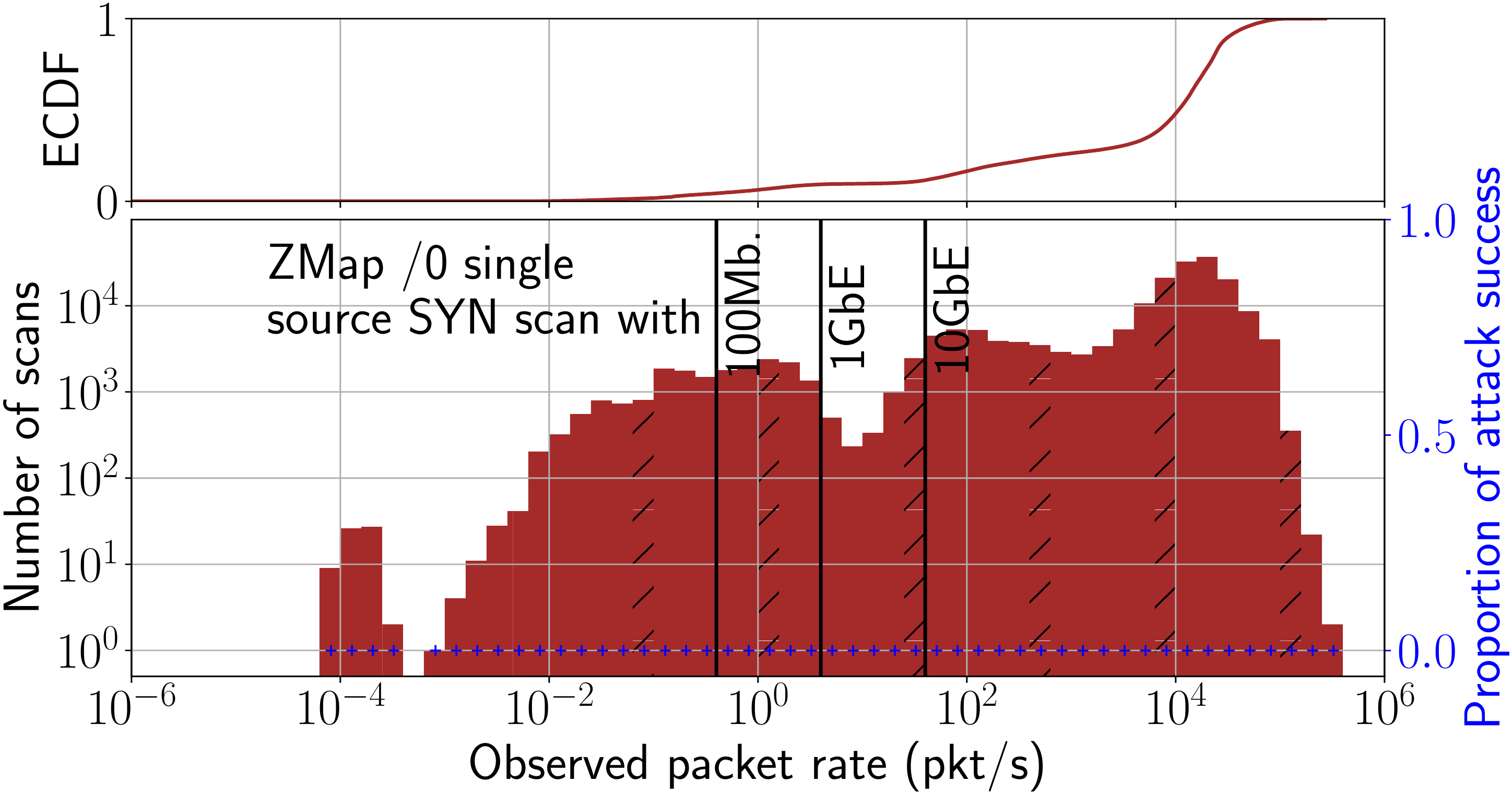}
    \label{fig:observed_packet_rate_non_zmap_tcp_sampled_dn}
  }
  \subfloat[Non-ZMap FP - UDP]{
    \centering
    \includegraphics[width=0.5\textwidth]{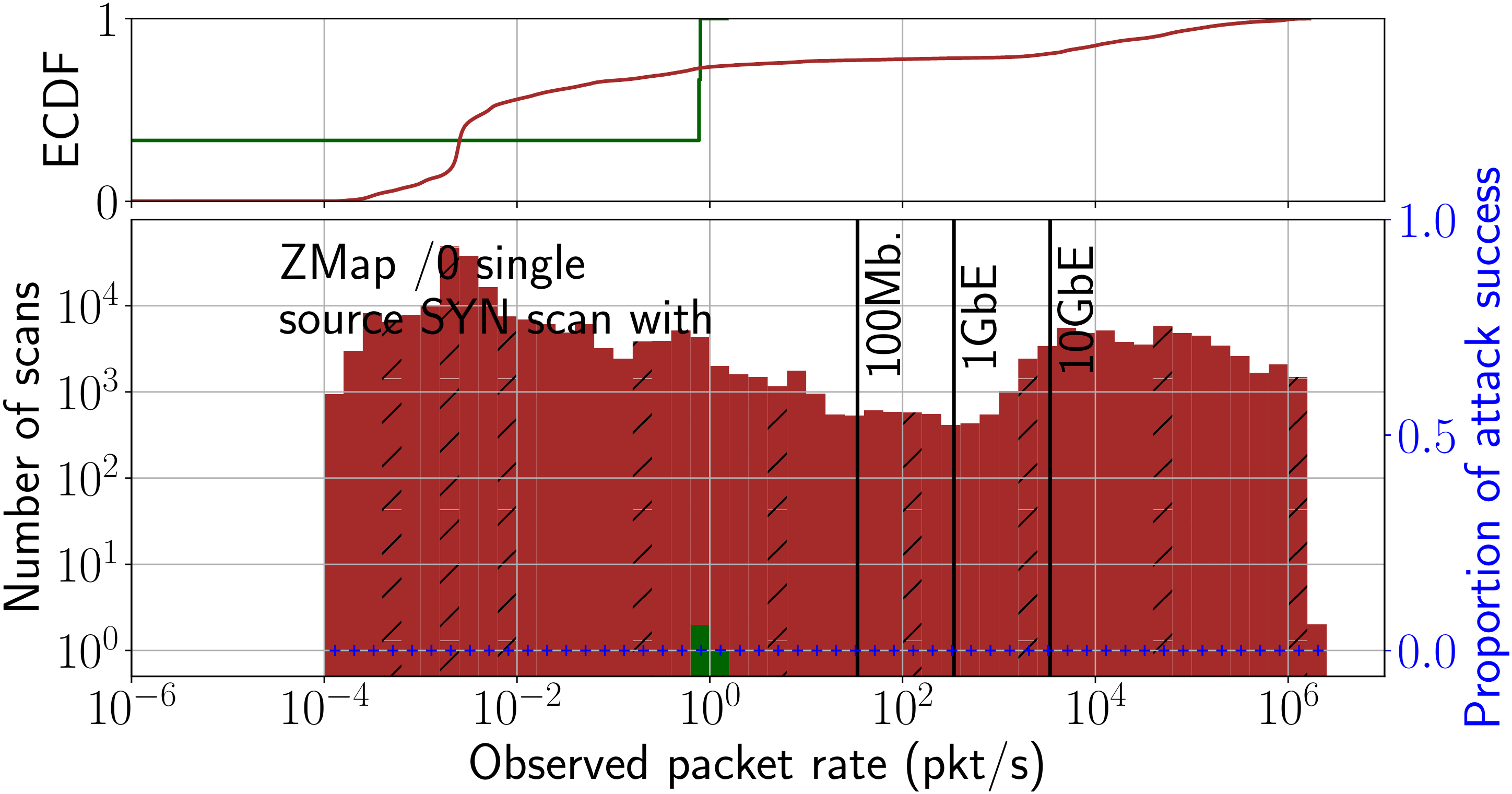}
    \label{fig:observed_packet_rate_non_zmap_udp_sampled_mawi}
  }
  \caption{Stacked histogram of observed packet rate.
    Attack success is green while failure is red.
    Dark green represent scans that were only identified with sampled sequence.
    Light green represent scans were identified with or without sampled sequence.
    Vertical lines represent the theoretical observed packet rates in our three /24 prefixes for a /0 SYN scan performed by ZMap using a single source with 1GbE and 10GbE upstreams.
  }
  \label{fig:observed_packet_rate_tcp_udp_sampled}
\end{figure*}

\subsubsection{Observed packet rate}

The number of observed packets sent by the scanning IP is $m$.
We note the receiving time of $o_1$ (resp. $o_m$), $t_1$ (resp. $t_m$) (see 
\Cref{fig:notations}).
The elapsed time between the first observed packet and the last is 
$\delta t = t_m - t_1$.
We define the observed packet rate $opr$ as: 
$opr = \frac{m}{\delta t} = \frac{m}{t_m - t_1}$.

\Cref{fig:observed_packet_rate_tcp_udp_sampled} presents the observed packet 
rate in our data.
In order to provide reference for $opr$ values, we build $opr$ values for a 
SYN scan that uses standard network speeds such as 100Mb, 1GbE and 10GbE for 
each datasets.
For example, such a scan using 1GbE would generate 0.40 packet per second 
in the telescope dataset, and 34 packets per second in the backbone dataset.
We chose the SYN scan because it is the most common type of ZMap scan observed 
\cite{Durumeric2014Security}.
In both datasets, the observed packet rate of most scans with the ZMap fingerprint 
are consistent with a throughput smaller than 1GbE.
Some scans nonetheless exhibit a throughput bigger than 1GbE.
They may use the same IP on several devices to perform sharding.
We however assess that this behavior is unlikely because it requires more 
configuration compared to standard sharding, and, above all, increases 
detection odds.
We thus hypothesize that these scans actually targets small prefixes.
We also observe that the $Det2$ method is less successful when $opr$ increases.
High throughput may increase packet reordering, and thus degrades 
our identification abilities.
\Cref{fig:observed_packet_rate_tcp_udp_sampled} shows that scans without the 
ZMap fingerprint exhibit a much higher observed rate than scans with the 
fingerprint.
We hypothesize that these scans target smaller network prefixes in 
higher proportion than ZMap scans.

\end{document}